	\providecommand\BibTeX{{%
			\normalfont B\kern-0.5em{\scshape i\kern-0.25em b}\kern-0.8em\TeX}}}
\def\isArxiv{1}
\DeclareMathOperator*{\E}{{\mathbb{E}}}
\DeclareMathOperator{\poly}{poly}
\DeclareMathOperator{\Supp}{Supp}
\newcommand{\eps}{\epsilon}
\newcommand{\bs}{\mathbf{s}}
\newcommand{\bx}{\mathbf{x}}
\newcommand{\be}{\mathbf{e}}
\newcommand{\tx}{\tilde{x}}
\newcommand{\twd}{\mathrm{twd}}
\newcommand{\bDelta}{\mathbf{\Delta}}
\newcommand{\g}{\mathcal{G}}
\newcommand{\R}{\mathbb{R}}
\newcommand{\T}{\mathcal{T}}
\newcommand{\mS}{\mathcal{S}}
\newcommand{\gc}{\mathrm{GCIRCUIT}}
\newcommand{\threeSAT}{\text{3SAT}\xspace}
\newcommand{\nm}[1]{\left|\, #1\, \right|\xspace}
\newtheorem{theorem}{Theorem}[section]
\newtheorem{lemma}[theorem]{Lemma}
\newtheorem{definition}[theorem]{Definition}
\begin{document}
	
	\title{Public Goods Games in Directed Networks}
	
	\author{Christos Papadimitriou}
	\email{christos@columbia.edu }
	\author{Binghui Peng}
	\email{bp2601@columbia.edu}
	\affiliation{%
		\institution{Columbia University}
	}

	\renewcommand{\shortauthors}{Papadimitriou and Peng}
	
\begin{abstract}
	Public goods games in undirected networks are generally known to have pure Nash equilibria, which are easy to find.  In contrast, we prove that, in directed networks, a broad range of public goods games have intractable equilibrium problems: The existence of pure Nash equilibria is NP-hard to decide, and mixed Nash equilibria are PPAD-hard to find.  We define general utility public goods games, and prove a complexity dichotomy result for finding pure equilibria, and a PPAD-completeness proof for mixed Nash equilibria.  Even in the divisible goods variant of the problem, where existence is easy to prove, finding the equilibrium is PPAD-complete.  Finally, when the treewidth of the directed network is appropriately bounded, we prove that polynomial-time algorithms are possible.
\end{abstract}
	
	\begin{CCSXML}
		<ccs2012>
		<concept>
		<concept_id>10003752.10010070.10010099.10010100</concept_id>
		<concept_desc>Theory of computation~Algorithmic game theory</concept_desc>
		<concept_significance>500</concept_significance>
		</concept>
		<concept>
		<concept_id>10003752.10010070.10010099.10010103</concept_id>
		<concept_desc>Theory of computation~Exact and approximate computation of equilibria</concept_desc>
		<concept_significance>500</concept_significance>
		</concept>
		<concept>
		<concept_id>10003752.10010070.10010099.10010109</concept_id>
		<concept_desc>Theory of computation~Network games</concept_desc>
		<concept_significance>500</concept_significance>
		</concept>
		</ccs2012>
	\end{CCSXML}
	
	\ccsdesc[500]{Theory of computation~Algorithmic game theory}
	\ccsdesc[500]{Theory of computation~Exact and approximate computation of equilibria}
	\ccsdesc[500]{Theory of computation~Network games}
	
	\keywords{public goods game, network game, equilibrium computation.}

	\maketitle
	

	\section{Introduction}
\label{intro}

A public good is a resource which, once produced, is available to all (non-excludability), and can be enjoyed collectively by many agents (non-rivalry\footnote{Non-rivalry was called {\em collective consumption} by Paul Samuelson, who initiated the study of the subject \cite{samuelson1954pure}.}). Scientific knowledge \cite{stiglitz1999knowledge}, open-source software, vaccination for an infectious disease, volunteer work, information resources, and clean environment are fine examples of public goods. Since public goods can be produced at a cost and contribute to the utility of others, they enable a variety of strategic behaviors such as free-riding.  Game theoretic formulations of public goods have been extensively studied by economists --- see~\cite{bg86} for a classical framework for the public goods problem within which a unique Nash equilibrium exists.  

Networks are perfect arenas for public goods games~~\cite{bramoulle2007public}.  Networks model the fact that a particular public good, such as a piece of software or protection due to the immunization of an individual, may not be accessible by all, but only by the neighbors of the node where it is produced.  A node's utility then is an nondecreasing function of the goods in the neighborhood, minus the cost of the goods produced by the node.  Almost all of the literature deals with the homogeneous case, where all nodes have the same two strategies (produce the common goods at a cost, or not) and the same utility function (see \cite{yu2020computing} for an exception); in fact, the nondecreasing functions max and sum are typically considered.  In this paper, we assume that all nodes have the same utility (even though they have different circumstances due to network connectivity, and hence the game is not symmetric unless the graph is), and we consider very general utility functions.  There is extensive work on public goods in undirected networks (see the related work subsection), and the rough consensus seems to be that, in just about all variants of the problem (again, with the exception of \cite{yu2020computing}), pure Nash equilibria exist --- typically corresponding to independent or dominating sets of the graph --- and are easy to find.  

Undirected graphs have limitations as models of utility transfer.  The ability to enjoy the public goods produced by others is not necessarily symmetric --- for example, clean air in a neighboring city is of no use if that city is downwind; if my house is on a co-worker's way to work, then the good of carpooling to work produced by my co-worker benefits me, but not vice-versa; while social networks are often directed: Twitter, Instagram, Flickr, and others.  There has been some work on public goods in directed networks, e.g.~\cite{lopez2013public}, where sufficient conditions for the existence of equilibria are developed.  The general impression one gets from the literature is that the matter of equilibria in the directed case is more subtle.



{\em This paper is a comprehensive exploration of the complexity of the equilibrium problem in public goods games on directed graphs.} 

The simplest and most widely studied variant of the problem is the indivisible case with the max utility: the decision a node faces is whether or not to produce the good; and a node does not need to produce the good if one or more of its predecessors have it.  It turns out to be quite intricate.  It is easy to see that a pure equilibrium may not exist (consider a directed odd cycle), and it turns out that it is NP-complete to decide if a pure equilibrium does exist.  We give a simple reduction to that effect (Theorem~\ref{thm:np-hard}).

We then generalize this NP-completeness result to a full {\em complexity dichotomy} of nondecreasing utility functions.  We identify three families of utility functions that can be solved in polynomial time:  The {\em flat functions}, {\em the steep functions}, and {\em the alternating functions.}  The first two have trivial equlibria where all nodes abstain or all nodes produce the good, respectively.  In the case of alternating functions, finding a pure Nash equilibrium is shown to be equivalent to solving a system of equations in $\bf F_2$.  The main part of the proof entails showing that all other functions make the equilibrium problem NP-hard (Theorem \ref{thm:np-hardgen}).

Since pure equilibria in these games are fraught with non-existence and NP-completeness, can we find in polynomial time a mixed Nash equilibrium (guaranteed to exist by Nash's theorem)?    We prove (Theorem \ref{thm:ppad-mix}) that this problem is PPAD-complete, even in the simplest case of the max utility; this is perhaps the most technically demanding proof in this paper.
We reduce from the generalized circuit problem, proved to be PPAD-hard in~\cite{rubinstein2018inapproximability, chen2009settling}. The reduction requires several new ideas, including the definition of a new kind of intermediate game --- in addition to several that already exist in this literature --- which we call {\em the threshold game}, and we believe is of interest in its own right.  
Finally, when the goods are divisible, the {\em max} case of the problem (the utility is the maximum of the neighbors minus the good's cost) is one where it can be seen through a simple averaging argument that a mixed equilibrium exists; we show that this too is PPAD-complete to find, this time by a reduction from mixed Nash equilibria in two-player win-lose games~\cite{chen2007approximation,abbott2005complexity}.


All of our complexity results hold for sparse networks, with indegrees and outdegrees at most three.  But how about networks that are tree-like {\em in the sense of graph minors}~\cite{robertson1986graph}?   We show that, when the (underlying undirected) network has {\em bounded treewidth}, essentially all versions of the Nash equilibrium problem of network public goods games can be solved, or at least approximated arbitrarily close, in polynomial time. Our algorithm and techniques are inspired by \cite{daskalakis2006computing} and \cite{thomas2015pure}, but several substantial adaptations and innovations are needed.

\paragraph{Our contributions.}   In summary, our main contributions are these:
\begin{itemize}

\item  Sweeping intractability results for the equilibrium problem of public goods games in directed networks, including a novel PPAD-completeness proof through threshold games, an intriguing analysis of polynomial special cases for the pure equilibrium problem culminating in a precise P/NP-complete dichotomy, and even a very different PPAD-completeness proof for divisible goods.

\item The formulation of public goods games in networks with a general objective function --- beyond the two functions treated in the literature, max and sum --- leading to a surprisingly rich and diverse family of problems (Section 3).  In the discussion section we point out that the complexity of such classes of equilibrium problems is open even for undirected networks.

\item An approximation algorithm when the treewidth is $O({\log n/ \log\log n})$, through the development of new and enhanced techniques for approximating equilibrium problems in graphical games with small treewidth. 
\end{itemize}

\subsection{Related work}
\label{sec:related-work}

Bramoull\'e and Kranton~\cite{bramoulle2007public} initiated the study of public goods in a network.  They consider a type of pure Nash equilibrium called {\em specialized equilibrium}, and prove that such equilibria 
are {\em stable} under small perturbations, {\em universal} (always exist), and in fact {\em computable} by a natural distributed algorithm, since they correspond to {maximal independent sets} of the graph; see~\cite{dall2011optimal,boncinelli2012stochastic,lopez2013public, feldman2013pricing, bramoulle2014strategic, allouch2015private,shin2017you,elliott2019network,yu2020computing,kempe2020inducing} for follow-up works.
Bramoull\'e et. al.~\cite{bramoulle2014strategic} extended the theory to imperfectly substitutable public goods, and proved the existence of a unique Nash equilibrium, assuming that the graph's lowest eigenvalue is sufficiently small. 
Allouch~\cite{allouch2015private} differentiates private provision from public provision, and again characterizes the existence and uniqueness of a Nash equilibrium through the lowest eigenvalue of the graph.
Public goods games were first generalized to directed graphs in~\cite{lopez2013public}, who provide sufficient conditions for pure Nash equilibria to exist. The only complexity result regarding such public goods games we are aware of is~\cite{yu2020computing}: finding a pure Nash equilibrium of a discrete version of the public goods game, albeit in the far more general case of {\em heterogeneous} agents, is NP-hard. 
We refer interested readers to the surveys~\cite{jackson2015games, galeotti2010network,bramoullegames} for a general coverage of this area.


Our work uses certain ideas from {\em graphical games}~\cite{kearns2001graphical}. 
It is NP-hard to find a pure Nash equilibrium~\cite{gottlob2005pure} and PPAD-hard to compute, even approximately, a mixed Nash equilibrium~\cite{daskalakis2009complexity, rubinstein2018inapproximability, chen2009settling} of a general graphical game with maximum degree $3$.
However, the problem is tractable in several settings~\cite{daskalakis2006computing, thomas2015pure, daskalakis2015approximate}. Daskalakis and Papadimitriou~\cite{daskalakis2006computing} developed a polynomial-time approximation scheme (PTAS) for computing an $\eps$-approximate Nash equilibrium  when the game has bounded strategy size, the network has bounded neighborhood size and $O(\log n)$ treewidth. Thomas and Leeuwen~\cite{thomas2015pure} provided an algorithm that computes a pure Nash equilibrium in $\poly(s^{w}, |M|)$, where $s$ is the strategy size, $w$ the {\em treewidth} of the graph and $|M|$ the size of the payoff matrix. We use similar ideas in our main algorithmic result for computing Nash equilibria in public goods problems for networks of bounded treewidth, but we have to address the problem that, in the present case, the parameter $|M|$ of this algorithm is exponential.

The PPAD complexity class was introduced by Papadimitriou~\cite{papadimitriou1994complexity} to capture one particular genre of total search functions, encompassing the notion of equilibrium. 
The PPAD-completeness of Nash equilibria was established in~\cite{daskalakis2009complexity,chen2009settling} and extended recently in~\cite{rubinstein2018inapproximability,rubinstein2016settling}. 
Over the past decades, a broad range of problems have been proved to be PPAD-hard, including equilibrium computation~\cite{daskalakis2009complexity1,abbott2005complexity, chen2007approximation,chen2015complexity}, 
market equilibrium~\cite{chen2009spending, chen2009settling-market, vazirani2011market, chen2011optimal,chen2013complexity}, equilibrium in auction~\cite{chen2021complexity,filos2021complexity}, fair allocation~\cite{othman2016complexity}, min-max optimization~\cite{daskalakis2021complexity} and problems in financial networks~\cite{schuldenzucker2017finding}.  





	\section{Model}
\label{sec:model}
A {\em public goods game} is a game with $n$ players, defined through a directed graph $G(V, E)$ without loops, where $V=\{1, \ldots, n\}$ is the set of players. 
We use $N(i)$ to denote the {\em neighborhood} of $i$, namely incoming neighbors of agent $i$, i.e., $N(i) = \{i\} \cup \{j | (j,i)\in E\}$. 
We assume common game theoretic terms and notation, such as strategy, strategy profile, pure Nash equilibrium and (mixed) Nash equilibrium.  
If $\bs = (s_1, \ldots, s_n)$ is a strategy profile, we use $s_{-i}$ to denote actions adopted by all agents {\em except} $i$. 

As is almost always done with public goods games, we assume that {\em all players have the same strategy space and the same utility function}. 
In the indivisible good (discrete) case, the strategy space of all players is $S=\{0,1\}$, while in the divisible (continuous) case $S=[0,\infty)$. 
To define the {\em utility function} $U_i$ of a player $i$, we start that defining the {\em price} or {\em cost} $p$ of producing the good $s_i$, common to all players.  In the indivisible case, it is a single real $p(s_i) = p >0$.  In the divisible case it is a function $p:\R_+\rightarrow \R_+$.

Once $p$ has been fixed, the common utility function of agent $i$ for the strategy profile $\bs$ is $U_i(\bs) = X_i(\bs) - p(s_i)$, where $X_i$ is a {\em symmetric social composition function of the strategies played by the players in $N(i)$.}
Since players may have different indegrees, and thus different sizes of neighborhood, we assume for uniformity that the common social composition function $X$ is a {\em symmetric} function from $S^n$ to the reals, where the strategies of players not in $N(i)$ are all set to zero --- a value that does not affect $X$.  
The composition functions studied by the vast majority of the literature is the {\em max} (or {\em best shot}, or {\em or}\/) function in the indivisible case, picking the maximum of the neighborhood's $0-1$ choices, while in the divisible case the composition functions {\em max} and {\em sum} is used.

In indivisible good games with {\em max} composition, in the literature it is always assumed that $p\neq 1$, because otherwise $p=1$ creates ties between contributing and free-riding.  For more general indivisible good games and social composition functions $X$, we shall also avoid ties between contributing and free-riding.  This can be achieved by assuming that all values of $X$ are rational (not a significant loss of generality), while $p$ is also rational but with a large denominator (technically, larger than the square of the largest denominator used in the values of $X$).  This completes the definition of the common general utility function $U$, and thus of the game.

\if{false}
\begin{align}
\label{eq:indivisible-utility}
u_i = \left\{\begin{matrix}
0 &\sum_{j\in N(i)}s_j = 0 \\
U-p & s_i = 1\\
U & \sum_{j\in N(i)}s_j \geq 1 \textbf{ and }s_i = 0.
\end{matrix}
\right.
\end{align}
That is, we assume that access to multiple goods does not increase the benefit (unit demand) and the value of a good does not decrease if the good is shared by multiple agents (non-rivalry). 
The second assumption are rather standard in the literature~\cite{dall2011optimal,shin2017you,boncinelli2012stochastic}, while the first is called in the literature the {\em best-shot} game of~\cite{hirshleifer1983weakest}, here generalized to directed graphs.

\subsection{Divisible Good} 
\label{sec:model-divisible-good}
Divisible public goods are scalars, such as air quality and volunteer work.
The strategy profile $S_i$ of each agent $i$ is $[0, +\infty)$ and we assume a cost $p$ per unit of the good. 
Utility depends on the {\em social composition functions} $X(\cdot)$, where $X: \R^{n}\rightarrow \R^{n}$ maps a strategy profile $\bs$ to a vector $X(\bs)$ that indicates the amount of goods available for each agent.
The usual choice of social composition in the literature is the summation function~\cite{bramoulle2007public}, i.e., $X(\bs)[i] = \sum_{j \in N(i)} s_j, \, \forall i$; we also consider the best-shot rule (max) and the weakest link rule (min)~\cite{hirshleifer1983weakest}:
\begin{itemize}
	\item $X(\bs)[i] = \max_{j \in N(i)} s_j\,\, \forall i$ \quad  Best-shot.
	\item $X(\bs)[i] = \min_{j \in N(i)} s_j\,\, \forall i$ \quad Weakest-link.
\end{itemize}
	
The utility of an agent $i$ is defined as 
\begin{align}
\label{eq:divisible-util}
u_i = U\left(X(\bs)[i]\right) - p\cdot s_i,
\end{align}
where $U(\cdot)$ denotes the valuation function of the good. 
We assume that the valuation function $U$ is non-decreasing and concave, i.e., the provision of public good brings positive benefit and it has diminishing returns.  It is also useful to assume that $U$ is differentiable, at least almost everywhere.
Notice that, if we set the valuation function to be $U(x) = U\cdot \min\{x,  1\}$ and restrict the strategy space of each agent to $\{0,1,2,\ldots\}$, 
then both the summation and best shot rule collapse to the indivisible case of the previous subsection.

\fi
We are interested in the standard concepts of pure and mixed Nash equilibrium.  A strategy profile $\bs = (s_1, \ldots, s_n)$ of the public good game is a {\em (pure) Nash equilibrium}, 
if no agent can derive better utility by changing their own strategy,
\begin{align*}
u_{i}(s_i, s_{-i}) \geq u_{i}(s_{i}', s_{-i}) \quad \forall i\in V,\,\forall s_i, s_{i}'\in S_i.
\end{align*} 

In a {\em mixed Nash equilibrium} $\bDelta = (\Delta_1, \ldots, \Delta_n)$, each agent $i$ plays a distribution $\Delta_i$ over its strategy set $S_i$, and satisfies

\begin{align}
\label{eq:nash-def1}
\E_{s_i \sim \Delta_i, s_{-i}\sim \Delta_{-i}}\left[u_{i}(s_i, s_{-i})\right] \geq \E_{s_{-i}\sim \Delta_{-i}}\left[u_{i}(s_{i}', s_{-i})\right] \quad \forall i\in V,\,\forall s_{i}'\in S_i.
\end{align} 

Define $\Supp(\Delta_i)$ to be the support of the distribution $\Delta_i$, i.e., $\Supp(\Delta_i)= \{s_i | s_i \in S_i, \Delta_i(s_i) > 0\}$. Then the definition in~\eqref{eq:nash-def1} is equivalent to 
\begin{align*}
 \forall i\in V,\,\forall s_i \in \Supp(\Delta_i), s_{i}' \in S_i: \E_{s_{-i}\sim \Delta_{-i}}\left[u_{i}(s_i, s_{-i})\right] \geq \E_{s_{-i}\sim \Delta_{-i}}\left[u_{i}(s_{i}', s_{-i})\right].
\end{align*}
An {\em $\eps$-approximately well supported Nash equilibrium} ($\eps$-Nash) is then defined as
\begin{align}
\label{eq:nash-def3}
\forall i\in V,\,\forall s_i \in \Supp(\Delta_i), s_{i}' \in S_i: \E_{s_{-i}\sim \Delta_{-i}}\left[u_{i}(s_i, s_{-i})\right]  \geq \E_{s_{-i}\sim \Delta_{-i}}\left[u_{i}(s_{i}', s_{-i})\right] - \eps.
\end{align}

	\section{Pure Nash Equilibria: A Dichotomy}
\label{sec:indivisible}
In this section we characterize the complexity of finding pure equilibria, focusing first on the best shot (max, or) function.  In contrast to undirected networks, where every {\em maximal} independent set corresponds to a pure Nash equilibrium, pure Nash equilibria may not exist in directed graphs (see Figure~\ref{fig:pure-nash}).  We show in this section that determining whether a pure Nash equilibrium exists is NP-complete, and then generalize this to a sweeping complexity dichotomy result, characterizing precisely --- modulo the P$\neq$NP conjecture --- the kinds of utility functions that have tractable Nash equilibrium problems. 

\label{sec:np-hard}
	\begin{figure}[!ht]
		\centering
			\begin{tikzpicture}[scale=0.5]
			\foreach \x in {0,40,...,320} {
				\node  (\x) at (\x:2) [draw, circle, very thick] {};
				\node  (\x+1) at (\x + 40:2) [draw, circle, very thick] {};
				\draw[->, very thick] (\x) -- (\x+1);
			}
			\end{tikzpicture}
		\caption{Odd cycles have no pure Nash equilibrium.}
		\label{fig:pure-nash}		
	\end{figure}

\begin{theorem}
	\label{thm:np-hard}
Deciding whether a pure Nash equilibrium exists in an indivisible public good game with the max social composition function is NP-complete.
\end{theorem}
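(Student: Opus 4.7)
The plan is to establish NP-completeness in two steps: membership in NP and hardness via a reduction to (equivalently, from) the classical kernel existence problem in digraphs. Membership is routine: a candidate strategy profile $\bs \in \{0,1\}^n$ is verified in polynomial time by checking, at each vertex, that no unilateral deviation strictly improves the utility; this is local because each player's utility depends only on its in-neighborhood.

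The core of the argument is a structural characterization of pure Nash equilibria. Fixing any cost $p \in (0,1)$ and using max composition, player $i$'s utility is $1 - p$ when $s_i = 1$, and is $1$ or $0$ when $s_i = 0$ depending on whether some in-neighbor plays $1$. Therefore, if $S := \{i : s_i = 1\}$ denotes the set of contributors, the best-response conditions become: (i) for every $i \in S$, no in-neighbor of $i$ lies in $S$, i.e., $S$ is independent in $G$; and (ii) for every $i \notin S$, some in-neighbor of $i$ lies in $S$, i.e., $S$ absorbs the complement via in-edges. Reversing all arcs, these are precisely the standard conditions defining a \emph{kernel} (an independent, out-dominating set) in the reverse digraph $G^R$. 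Hence pure Nash equilibria of the public goods game on $G$ correspond bijectively with kernels of $G^R$.

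With this correspondence in hand, NP-hardness is immediate from Chv\'atal's classical theorem that kernel existence in a digraph is NP-complete: given an instance $H$ of kernel existence, instantiate the public goods game on $G := H^R$ with any rational $p \in (0,1)$; $H$ admits a kernel if and only if this game admits a pure Nash equilibrium. Combined with NP membership, this yields NP-completeness. As a sanity check, the smallest classical kernelless digraph is the odd directed cycle, which matches Figure~\ref{fig:pure-nash} exactly.

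The main obstacle is essentially notational — one must be careful about the direction of arcs and the convention $N(i) = \{i\} \cup \{j : (j,i) \in E\}$ so that the correspondence goes through $G^R$ rather than $G$; once the best-response analysis is in place there is little left to do. If a self-contained proof is preferred over citing Chv\'atal, a direct reduction from 3SAT can be built using variable gadgets based on short directed cycles (forcing a binary truth assignment in any equilibrium) and clause gadgets that have an absorbing strategy exactly when at least one literal is set appropriately; the same template also supports the bounded in/out-degree strengthening asserted in the introduction via a standard local degree-reduction transformation.
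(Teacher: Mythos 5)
Your proof is correct, but it takes a genuinely different route from the paper's. You characterize the contributor set of any pure equilibrium (for max composition and $p\in(0,1)$) as an independent set that in-dominates its complement, identify such sets with kernels of the reverse digraph, and then invoke Chv\'atal's classical theorem that kernel existence is NP-complete; the correspondence is exactly right (and your condition (i), though phrased only in terms of in-arcs, does rule out arcs in both directions between contributors once quantified over all of $S$, so the identification with independence goes through). The paper instead gives a self-contained reduction from 3SAT: a variable gadget consisting of a directed path whose first two nodes share a bidirectional edge (forcing and then propagating a binary choice), and a clause gadget in which an OR sub-gadget feeds a node that is forced to contribute by an attached directed triangle, so that at least one literal node must contribute. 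Your argument is shorter and makes explicit the kernel correspondence that generalizes the maximal-independent-set characterization from the undirected case, at the cost of leaning on an external hardness result. What the paper's explicit gadgets buy is reuse: the identical variable and clause gadgets are the backbone of the dichotomy result (Theorem~\ref{thm:np-hardgen}), where for general non-flat, non-steep, non-alternating utilities the clause gadget's acceptance condition changes and hardness is derived from {\sc Not-all-equal SAT} or {\sc One-in-3SAT}; the kernel reduction is specific to max composition and does not transfer there. Your closing remark that a direct 3SAT reduction with cycle-based variable gadgets and absorbing clause gadgets would also work is essentially a sketch of the paper's construction.
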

\begin{proof}
In an equilibrium profile $\bs = (s_1, \ldots, s_n)$, for each agent $i$, we have $s_i = 1$ if $\sum_{j\in N_i}s_j = 0$, and $s_i = 0$ otherwise. 
In another words, an agent would purchase the good if, and only if, none of its predecessors possesses the good. 
The reduction is from \threeSAT, and employs the following two gadgets (see Figure.~\ref{pic:variable-gadgets} and Figure.~\ref{pic:AND-gadgets}).

	\begin{figure}[!h]
		\centering
		\begin{minipage}[t]{.5\textwidth}
			\centering
			\begin{tikzpicture}
			\node (0) at (-1, 0) [circle, draw, very thick] {};
			\node (1) at (0, 0) [circle, draw, very thick] {};
			\node (2) at (1, 0) [circle, draw, very thick] {};
			\node (3) at (2, 0) [circle, draw, very thick] {};
			
			\draw[<->,  very thick] (0) -- (1);
			\draw[->,  very thick] (1) -- (2);
			
			\node[text width=1cm] at (-0.6, 0.4) {$x_i$};
			\node[text width=1cm] at (0.4, 0.4) {$\bar{x}_i$};
			\node[text width=1cm] at (1.4, 0.4) {$x_i$};
			\node[text width=1cm] at (1.8, -0.02) {$\cdots$};
			\node[text width=1cm] at (2.4, 0.4) {$\bar{x}_i$};

			\node[text width=1cm] at (-1,-1) {};
			\node[text width=1cm] at (-1,1) {};
			\end{tikzpicture}
			\caption{Variable gadget.}
			\label{pic:variable-gadgets}
		\end{minipage}%
		\begin{minipage}[t]{.5\textwidth}
			\centering
			\begin{tikzpicture}
			\node (0) at (-1, 1) [circle, draw, very thick] {0};
			\node (1) at (-1, 0) [circle, draw, very thick] {1};
			\node (2) at (-1, -1) [circle, draw, very thick] {2};
			\node (3) at (0, 0) [circle, draw, very thick] {3};
			\node (4) at (1, 0) [circle, draw, very thick] {4};
			
			\node (5) at (2, 0) [circle, draw, very thick] {5};
			\node (6) at (2.85, 0.5) [circle, draw, very thick] {6};
			\node (7) at (2.85, -0.5) [circle, draw, very thick] {7};

			\draw[->,  very thick] (0) -- (3);
			\draw[->,  very thick] (1) -- (3);			
			\draw[->,  very thick] (2) -- (3);
			\draw[->,  very thick] (3) -- (4);	
			\draw[->,  very thick] (4) -- (5);			
			\draw[->,  very thick] (5) -- (6);			
			\draw[->,  very thick] (6) -- (7);			
			\draw[->,  very thick] (7) -- (5);
			
			\node[text width=1cm, scale=1] at (-1.3, 1) {$x_1$};
			\node[text width=1cm, scale=1] at (-1.3, 0) {$\bar{x}_2$};
			\node[text width=1cm, scale=1] at (-1.3, -1) {$x_3$};
			
			\node[text width=2cm, scale=0.9] at (1.2, 0.5) {$ x_1\!\vee\! \bar{x}_2 \!\vee x_3$};

			\end{tikzpicture}
		\caption{Clause gadget.}
		\label{pic:AND-gadgets}
		\end{minipage}
	\end{figure}
	
\paragraph{Variable gadget}
	For each variable $x_i$, we construct a directed path with $2k_i$ nodes,  where $k_i$ is the number of times $x_i$ appears in the \threeSAT instance.  
The path is directed, with the exception that there is a bi-directional edge between the first two nodes.  The bi-directional edge forces the choice (exactly one of the first two nodes has the good), and the rest of the path propagates it (either all odd nodes have the good and all even nodes do not, or the other way around).
		
\paragraph{Clause gadget}
	The clause gadget consists of two parts. The left part is an OR gadget, in that node 4 must equal the disjunction of nodes $0, 1,$ and $2$.  To see this, suppose that none of these three nodes has the good; then node 3 must have it, and so node $4$ does not.  And if one or more of nodes $0, 1, 2$ has the good, then 3 does not have the good, and thus 4 must have it.

The right part forces the clause to be true --- that is, in any equilibrium profile, agent $4$ must play strategy 1 and buys the good. 
This is because if $4$ does not provide the good, then $5,6,7$ is an isolated odd circle, which cannot exist in a pure Nash equilibrium.
On the other hand, players $4$ and $6$ buying the good, and players $5$ and $7$ not buying it, is a pure Nash equilibrium of the four rightmost nodes. In summary, the clause gadget ensures that at least one of the nodes $0,1,2$ buys the good.
	
Putting things together, given any \threeSAT instance we can construct a public good game by composing variable gadgets and clause gadgets in the obvious way, so that the pure Nash equilibria of the public good game are in one to one correspondence with the satisfiable solutions of the \threeSAT instance, concluding the proof.
\end{proof}

We want to generalize this result to {\em any} social composition function $X$, and so we start with the question: {\em For which composition functions is the pure Nash equilibrium problem polynomial-time solvable?} Consider a symmetric, non-decreasing function $X:\{0,1\}^n\mapsto \R_+$ without loss of generality with $X(0^n)=0$.  Because of symmetry, we can treat $X$ as a function from $\bf N$ to $R_+$, since its value depends on $\sum_{i=1}^n s_i$; we shall use the same symbol for this form of $X$\footnote{That is, we assume that $X$ has values for all integers, not limited to the size of the network; this is obviously a harmless convention.}, and recall that $X(0)=0$.  Since $X$ is monotone, it can be also thought as a sequence of nonnegative {\em steps}.  Call $X$ {\em flat} if $X(1)\leq p$; that is, the first {step} of $X$ does not provide sufficient incentive to produce the good.  Obviously, all flat functions have the all-zero pure Nash equilibrium, and so the problem is trivial.   Call now $X$ {\em steep} if for all $k\geq 0, X(k+1) \geq X(k)+p$; that is, all steps are at least $p$.  Then all nodes have an incentive to produce the good no matter what anybody else is doing, and so the all-ones solution is a pure Nash equilibrium, and again the problem is trivial.
We have shown:

\begin{lemma}
The pure Nash equilibrium problem is in P if the utility function is flat.  Ditto for steep functions.
\end{lemma}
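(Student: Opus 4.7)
The plan is to short-circuit the search entirely: for each of the two families I would describe an explicit pure profile that is always a Nash equilibrium, and then the polynomial-time algorithm simply outputs that profile. For flat functions I would output the all-zero profile; for steep functions, the all-one profile. The verification in both cases reduces to a single unilateral-deviation check that is literally the defining inequality of the family, so no combinatorics is needed.

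For the flat case, fix the all-zero profile. Every neighbor of an arbitrary agent $i$ in $N(i)\setminus\{i\}$ plays $0$, so by symmetry the composition seen by $i$ depends only on $s_i$. Playing $s_i=0$ yields utility $X(0)-p(0)=0$, while deviating to $s_i=1$ yields $X(1)-p$. Flatness is exactly $X(1)\leq p$, and the tie-breaking convention from Section~\ref{sec:model} (denominator of $p$ chosen strictly larger than the square of any denominator appearing in the values of $X$) rules out equality, so $X(1)-p<0$. No agent has a weakly profitable deviation, so the all-zero profile is a pure Nash equilibrium.

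For the steep case, fix the all-one profile and consider an arbitrary agent $i$ with $k:=|N(i)\setminus\{i\}|$ predecessors, all of whom play $1$. Staying at $s_i=1$ yields $X(k+1)-p$, while deviating to $s_i=0$ yields $X(k)$. Steepness gives $X(k+1)\geq X(k)+p$, i.e., $X(k+1)-p\geq X(k)$, for every $k\geq 0$, and again the tie-breaking convention makes the inequality strict, so producing is a strict best response for every $i$. Hence the all-one profile is a pure Nash equilibrium.

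There is no real obstacle: the whole content is that the constant profile works, and the only point worth emphasizing is the role of the tie-breaking convention, which converts the non-strict inequalities built into the definitions of \emph{flat} and \emph{steep} into the strict best-response comparisons used in the equilibrium notion. The algorithm is just ``print the appropriate constant vector,'' which is clearly polynomial-time (in fact linear) in the size of the game.
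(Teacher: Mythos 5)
Your proof is correct and follows essentially the same route as the paper, which simply observes that the all-zero profile is an equilibrium for flat functions and the all-one profile for steep functions. Your extra care about the tie-breaking convention is fine but not strictly needed, since the equilibrium condition only requires a weak inequality against deviations.
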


Are there any other tractable cases?  It turns out, that there is one more: Call $X$ {\em alternating} if for all $k\geq 0$, $X(k+1) < X(k)+p$ if $k$ is odd, and $X(k+1) > X(k)+p$ if $k$ is even.  

\begin{lemma}
The pure Nash equilibrium problem is in P if the utility function is alternating.
\end{lemma}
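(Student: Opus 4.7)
The plan is to show that when $X$ is alternating, an agent's best response depends only on the \emph{parity} of the number of producing predecessors, and thereby reduce the pure Nash equilibrium problem to solving a linear system over $\mathbb{F}_2$.

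First I would characterize best responses. Fix any profile $s_{-i}$ and let $k_i = \sum_{j:(j,i)\in E} s_j$ count the producing predecessors of $i$ (the members of $N(i)\setminus\{i\}$ that produce). Playing $s_i=0$ yields utility $X(k_i)$ while playing $s_i=1$ yields $X(k_i+1)-p$, so $s_i=1$ is strictly preferred exactly when $X(k_i+1)-X(k_i)>p$. By the alternating assumption this strict inequality holds precisely when $k_i$ is even, while the reverse strict inequality holds when $k_i$ is odd. Hence in every pure Nash equilibrium, $s_i = 1$ iff $k_i$ is even.

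Next I would rewrite this condition modulo~$2$. It says exactly $s_i \equiv k_i + 1 \pmod 2$, i.e., for every $i\in V$,
\[
s_i + \sum_{j:(j,i)\in E} s_j \;\equiv\; 1 \pmod 2.
\]
Stacking these $n$ equations gives a linear system $(I+A^{\mathsf T})\bs = \mathbf{1}$ over $\mathbb{F}_2$, where $A$ is the adjacency matrix of $G$. The pure Nash equilibria of the game are \emph{precisely} the $\{0,1\}$-solutions of this system.

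Finally, the system can be solved, or declared infeasible, by Gaussian elimination over $\mathbb{F}_2$ in $O(n^3)$ time; if a solution exists I would output it, otherwise I would report that no pure equilibrium exists. There is no significant obstacle: the alternating definition uses strict inequalities, so best responses are uniquely determined and ties between contributing and free-riding never arise. The only conceptual step is recognizing that ``alternating'' is exactly the condition turning the best-response rule into a parity check, after which standard linear algebra over $\mathbb{F}_2$ finishes the job in polynomial time.
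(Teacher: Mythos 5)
Your proposal is correct and follows essentially the same route as the paper: both reduce the alternating best-response rule to the parity condition $s_i + \sum_{(j,i)\in E} s_j \equiv 1 \pmod 2$ and solve the resulting linear system over $\mathbb{F}_2$ by Gaussian elimination. Your explicit derivation of the best response from the strict inequalities in the definition of an alternating function is a welcome bit of extra detail that the paper leaves implicit.
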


\begin{proof}
Let $s = (s_1, \ldots, s_n)$ be the equilibrium profile with $s_i \in \{0, 1\}$. Based on the definition of alternating utility function, we have that for any $i \in [n]$
\[
s_i = \left\{
\begin{matrix}
1 & \sum_{(j, i) \in E}s_i =0 (\bmod 2)\\
0 & \sum_{(j, i) \in E}s_i =1 (\bmod 2).
\end{matrix}
\right.
\]
That is, a player chooses to produce when there is an even number of neighboring players who produce the good. Hence, the equilibrium problem reduces to the solution of a linear system of equations in ${\bf F}_2$ with one $0-1$ variable $s_i$ per player, with one equation for each player $i$:
$$s_i + \sum_{(j,i)\in E} s_j = 1 (\bmod 2).$$
This can be solved in polynomial time with Gaussian elimination, say.
\end{proof}
\smallskip

We next establish that, unless P = NP,  {\em these are the only tractable cases:}
\begin{theorem}
	\label{thm:np-hardgen}
If the utility function does not belong in these three classes: (1) flat; (2) steep; or (3) alternating, then the pure Nash equilibrium problem is NP-complete.
\end{theorem}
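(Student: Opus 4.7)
The plan is to reduce from \threeSAT, extending the gadget approach of Theorem~\ref{thm:np-hard}. Encode the utility function by the binary sequence $b_k$, with $b_k = 1$ if $X(k{+}1)-X(k) > p$ and $b_k = 0$ otherwise; the model's assumption on the denominator of $p$ rules out ties, so the sequence is well defined. A player with $t$ in-neighbors playing $1$ best-responds by $s = b_t$, so pure equilibria correspond to $0/1$ assignments consistent with this local rule. In this encoding, flat is $b_0=0$, steep is $b_k=1$ for all $k$, and alternating is $b_k=1\iff k$ is even; the hypothesis therefore gives $b_0=1$, some $b_k=0$, and a failure of the $1,0,1,0,\dots$ pattern.

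The structural core is the following observation: if $(b_k)$ avoids all four length-three windows $110$, $100$, $001$, $011$, then the only admissible consecutive triples are $000$, $010$, $101$, $111$, and a short case analysis on $(b_k,b_{k+1})$ forces $b_{k+2}=b_k$. So the sequence has period $2$ and must be one of $0^\infty$, $1^\infty$, $(10)^\infty$, $(01)^\infty$ --- exactly the tractable classes (the last being flat since $b_0=0$). Hence under the hypothesis at least one of those four forbidden windows occurs, at some first index $c'$; and since $b_0=1$ but the sequence is not steep, the two-letter window $10$ also occurs, at some first index $c$.

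Gates are now built by padding with \emph{source-$1$} nodes, i.e., nodes with no in-neighbors, which are forced by $b_0=1$ to play $1$ in any equilibrium. Attaching $c$ disjoint source-$1$ in-neighbors to a node shifts its response schedule from $(b_0,b_1,\dots)$ to $(b_c,b_{c+1},\dots)$. A $1$-input padded node at index $c$ therefore computes $\mathrm{NOT}$, and a $2$-input padded node at index $c'$ computes $\mathrm{NAND}$, $\mathrm{NOR}$, $\mathrm{AND}$, or $\mathrm{OR}$ according to which of $110$, $100$, $001$, $011$ occurs there. Combined with $\mathrm{NOT}$, each of these four gates is functionally complete, so a Boolean circuit verifying an arbitrary \threeSAT formula can be realized. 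Variable gadgets follow Figure~\ref{pic:variable-gadgets} with each directed edge replaced by a padded $\mathrm{NOT}$ gate: the leading $2$-cycle provides a free binary choice and the tail propagates literal occurrences. To force the circuit's output $y$ to $1$, connect $y$ as an extra input to one node of a $3$-cycle of $\mathrm{NOT}$ gates via a $\mathrm{NOR}$ element at that joint; a case check (on the $3$-cycle with $v_1=\mathrm{NOR}(y,v_3)$, $v_2=\mathrm{NOT}(v_1)$, $v_3=\mathrm{NOT}(v_2)$) shows that a pure equilibrium exists on this sub-gadget iff $y=1$, exactly mirroring Figure~\ref{fig:pure-nash}. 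If only $\mathrm{NAND}$, $\mathrm{AND}$, or $\mathrm{OR}$ is directly available, compose with $\mathrm{NOT}$ gates to synthesize the required $\mathrm{NOR}$ at the joint (or symmetrically force $y=0$ with $\mathrm{NAND}$ and negate the circuit's output).

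The main obstacle is bookkeeping around the padding. Source-$1$ pads attached to different gates must be distinct nodes, lest two gates share correlated inputs; and whenever $c$ or $c'$ is large enough to push a padded gate past indegree $3$, we decompose that gate into a small cascade of indegree-$3$ padded gates computing the same Boolean function, so that the sparse-degree claim of the paper is preserved. Neither modification enlarges the reduction beyond polynomial size, and the pure Nash equilibria of the resulting game are in one-to-one correspondence with the satisfying assignments of the original \threeSAT instance, which completes the NP-hardness reduction. Membership in NP is immediate, since a pure equilibrium is a certificate of polynomial size that can be verified by checking the best-response condition at each node.
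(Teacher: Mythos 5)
Your argument is correct, and it takes a genuinely different route from the paper's. The paper keeps the exact network of Theorem~\ref{thm:np-hard} (variable paths plus the fixed clause gadget), normalizes to $X(1)>p$, $X(2)<X(1)+p$ by padding with source players, and then does a case analysis on the next two steps of $X$: depending on their signs the same clause gadget encodes ``at least one true,'' ``not all equal,'' or ``exactly one true,'' so the reduction is from \threeSAT, {\sc Not-All-Equal SAT}, or {\sc One-in-3SAT} respectively, with one further shift when the function looks alternating beyond index $2$ and a special exception handled by under-padding node $3$ of each clause gadget. You instead encode the function as the bit string $b_k=[X(k+1)-X(k)>p]$, prove the clean combinatorial lemma that avoiding the windows $110,100,001,011$ forces period two (so the tractable classes are exactly the period-two strings), extract from the guaranteed window a functionally complete basis ($\mathrm{NOT}$ plus one of $\mathrm{NAND},\mathrm{NOR},\mathrm{AND},\mathrm{OR}$) via source-$1$ padding, and compile a single uniform \threeSAT\ verification circuit with an odd-cycle output-forcing gadget. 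Your version buys uniformity (one source problem, one explicit gate calculus) and makes it transparent why exactly those three classes are the tractable ones; the paper's buys a smaller fixed topology and avoids circuit compilation. Your replacement of the bidirectional edge by a two-cycle of padded $\mathrm{NOT}$s is a necessary correction for general $X$ (the raw bidirectional edge fails when $b_1=1$), and the output-forcing step is sound because any correct feed-forward realization of $\mathrm{NOR}$ at the joint, restricted to $y=0$, induces the map $\neg$ around the cycle.

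One claim you should drop or weaken: a gate padded to index $c'$ cannot be ``decomposed into a cascade of indegree-$3$ padded gates,'' because a node's best response depends only on the count of its in-neighbors playing $1$, so any node realizing the truth table at index $c'$ must itself have the $c'$ pads as in-neighbors, hence indegree $c'+2$. This is harmless for the theorem --- $c$ and $c'$ are constants of the fixed utility function, so the reduction stays polynomial and the degree is a constant depending on $X$ --- but the indegree-$3$ claim as stated is false (the paper's own construction has the same feature in its padding steps).
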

\begin{proof}
We use the variable gadgets and the clause gadgets in the proof of Theorem~\ref{thm:np-hard} (see Figure~\ref{pic:variable-gadgets} and Figure\ref{pic:AND-gadgets}), but we reduce from several different NP-hard problems. First,  observe that when $X$ is not flat, steep, or alternating, there must be a $k\geq 0$ such that $X(k+1) > X(k) +p$, $X(k+2) < X(k+1) +p$. We start by assuming that $k=0$, that is, $X(1) > p$ and $X(2) < X(1) + p$. We divide the proof into four cases.

{ \bf Case 1 \ \ } Suppose $X(3) < X(2) + p$, $X(4) < X(3) + p$, then it is easy to check that the construction of Theorem~\ref{thm:np-hard} works. Indeed, a function satisfying $X(1) >p, X(2) < X(1) +p, X(3) < X(2) + p, X(4) < X(3) + p$ is, for the purposes of the network constructed in the proof of the previous theorem, equivalent to the max function.

{ \bf Case 2 \ \ } Suppose $X(3) < X(2) + p$, $X(4) > X(3) + p$. We can still use the network constructed in the proof of Theorem~\ref{thm:np-hard}. The difference is that we reduce from {\sc Not-all-equal SAT}, since the clause gadget is satisfied if and only if one or two literals are true.

{ \bf Case 3 \ \ } Suppose $X(3) > X(2) + p$, $X(4) > X(3) + p$. Again, we consider the network constructed in the proof of Theorem~\ref{thm:np-hard}. We can check that the clause gadget is satifiable if and only if exactly one of the literal is true. The NP-hardness then comes from {\sc One-in-3SAT}.

{ \bf Case 4 \ \ } Suppose $X(3) > X(2) + p$, $X(4) < X(3) + p$. Since we assume $X$ is not alternating, there exists $t \geq 1$ satisfying $X(2t+1) > X(2t) + p$, $X(2t+2) < X(2t+1) + p$, and $X(2t+3), X(2t+4)$ does {\em not} obey $X(2t+3) > X(2t+2) + p$, $X(2t+4) < X(2t+3) + p$. We create $2t$ new players who have no incoming edges and directed edges to all other nodes. These $2t$ players will provide the good at equilibrium, and thus the remaining players start the game with $2t$ copies of the good already. The game for the original player is then changed to the function $\tilde{X}(j) = X(j - 2t)$, and NP-completeness follows from cases (1--3).


Finally, suppose that $k>0$.  Add $k$ new players who have no incoming edges, and directed edges to all other nodes.  At equilibrium, these nodes will provide the good, and so the remaining players will start the game with $k$ copies of the good already provided.  Therefore, the game for the remaining players will be as if $X(j)$ was changed to $X(j-k)$, that is to say, to a function covered by the previous paragraph. 

We only need to be cautious about one exception: $X(1) > p, \ldots, X(k + 1) > X(k) + p, X(k+2) < X(k+1) + p, X(k+3) > X(k+2) + p, X(k+4) < X(k+3) + p$, since $X$ could be alternating after $X(k)$.
We still create $k$ new players and direct them to all other agents, {\em except} for node $3$ in every clause gadget, for which we only connect $k - 1$ players to it. It is easy to check that our argument in Theorem~\ref{thm:np-hard} works, with one modification: the clause gadget is satisfiable if and only if exactly two of the literals are true. This, again, is NP-complete, as we can reduce from {\sc One-in-3SAT}.
\end{proof}

\section{PPAD-hardness of Mixed Nash Equilibria}
\label{sec:ppad-hard-mix}
We next examine mixed Nash equilibria of indivisible public goods games.
In a mixed Nash equilibrium, agents randomize over the two actions and choose to buy the public good with some probability. 
We denote by $s_i$ the probability that agent $i$ purchases the good.
 Also, by $x = y\pm \eps$ we mean that $y - \eps\leq x \leq y+\eps$.
 For ease of presentation, we assume $U = 1$ and $p < 1$ throughout the proof.
 The following result is the main technical contribution of this paper.

\begin{theorem}
	\label{thm:ppad-mix}
	There exists some constant $\eps > 0$, such that it is PPAD-hard to find an $\eps$-Nash of the indivisible public goods game.
\end{theorem}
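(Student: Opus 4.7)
The plan is to reduce from the generalized circuit problem $\gc$, which is PPAD-hard to approximate \cite{rubinstein2018inapproximability, chen2009settling}. The first observation is that in any mixed profile the expected utility of agent $i$ is $1 - \prod_{j \in N(i)}(1-s_j) - p\,s_i$, so its best response is controlled entirely by the product $q_i = \prod_{j \in N(i)\setminus\{i\}}(1-s_j)$: agent $i$ strictly prefers to produce when $q_i > p$, strictly prefers to abstain when $q_i < p$, and is indifferent when $q_i = p$. Passing to the coordinate $t_j = -\ln(1-s_j)$ turns this multiplicative best-response condition into the additive condition $\sum_{j \in N(i)\setminus\{i\}} t_j = -\ln p$ at indifference. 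This $\log$-additive structure is what will let us encode the linear arithmetic that drives $\gc$.

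The key new device is what we call the \emph{threshold game}. I would first define it as an abstract intermediate object: a graph whose nodes carry values in a safe interior interval $[\alpha, 1-\alpha]$, and whose local constraints say that a node is pushed toward $1$ (resp.\ toward $0$) whenever a designated linear combination of incoming $t$-values exceeds (resp.\ falls below) a target. A single public-goods node implements exactly such a threshold on the sum of its predecessors' $t$-values against the fixed target $-\ln p$. I would establish a two-way correspondence between approximate Nash equilibria of the public-goods game and approximate equilibria of the threshold game, so that the rest of the argument can be carried out in the cleaner threshold model.

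With threshold nodes in hand, the next step is to build gadgets for the standard $\gc$ primitives on values in $[0,1]$: assignment of a constant, copying, addition capped at $1$, multiplication by a small rational, and in particular the brittle comparator $G_>$ that must force its output close to $0$ or close to $1$ according to the sign of a difference of two inputs. Addition is natural in the $t$-coordinate (concatenating incoming edges into a single threshold node), multiplication by a rational constant is obtained by parallel copies, and comparison falls out directly from the threshold behaviour itself. A short \emph{stabilization} construction, in which an auxiliary node's own in-product is balanced against $p$ by routing it through a small set of reference nodes, is used to pin signal nodes to prescribed interior values so that the gadgets compose. Throughout we keep in- and out-degrees bounded by three, so that the PPAD-hardness holds in the sparse regime.

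The main obstacle, and the place where I expect most of the work to lie, is the error analysis. A small additive slack in the $s_j$'s becomes a large slack in $t_j = -\ln(1-s_j)$ near the endpoints of $[0,1]$, and the $\eps$-well-supported slack compounds through each layer of threshold gadgets. Controlling this requires (i) engineering the construction so that every signal remains bounded away from $0$ and $1$ via explicit clipping nodes, and (ii) choosing the cost $p$, the interior margin $\alpha$, and the gadget parameters so that the cumulative multiplicative error through the constant-depth gadget chain stays below the $\gc$ tolerance. Once these are set, any $\eps$-Nash of the constructed public-goods game can be decoded into an approximate solution of the $\gc$ instance, giving the desired PPAD-hardness.
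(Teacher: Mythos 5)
Your proposal follows essentially the same route as the paper: a reduction from the generalized circuit problem via an intermediate threshold game, using the logarithmic change of variables $t_j=-\ln(1-s_j)$ to convert the product-versus-$p$ best-response condition into an additive threshold condition, followed by gate gadgets and an error analysis that keeps signals in a safe interior range to obtain hardness for a fixed constant $\eps$. The only parts left schematic relative to the paper are the explicit elementary gadget (a directed $3$-cycle enforcing $\bx[v]=\max\{\tfrac{1}{2}-\bx[v_1]-\bx[v_2],0\}$) and the error-resilient implementation of the logic gates via transformations between the domains $\{0,\tfrac{1}{2}\}$ and $\{0,1\}$, but the overall architecture is the same.
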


We start with a high level overview of the proof.  
We reduce from the $\eps$-$\gc$ problem (see Definition~\ref{def:generelized-circuit}), which is shown to be PPAD-hard for sufficiently small constant $\eps > 0$ by Rubinstein~\cite{rubinstein2018inapproximability}.
Our reduction consists of two steps.
We first introduce an intermediate game, called the {\em threshold game} (see Definition~\ref{def:threshold-game}), where each individual's strategy depends solely on the {\em summation} of its neighbors' strategies.
The threshold game exhibits rich algorithmic and complexity structure, which we believe could be of independent interest. We show a correspondence between equilibrium profiles of threshold games and those of public goods games (Lemma~\ref{lem:threshold-to-public}); hence it suffices to demonstrate PPAD-hardness of finding an $\eps$-approximate equilibrium of threshold games.
The threshold game is semi-anonymous, and thus we can only modify the local structure of the graph in order to construct all 9 types of gates $\{G_{\xi}, G_{\times \xi}, G_{=}, G_{+}, G_{-}, G_{<}, G_{\wedge}, G_{\vee}, G_{\neg}\}$. 
We start by constructing an elementary gadget $G_{\frac{1}{2}-}$ (see Figure~\ref{fig:element-gadget}), and use it as a building block to gradually construct most of the gates.
We restrict the players' equilibrium strategies to be in $[0, \frac{1}{2} + \eps]$ in the {\em arithmetic} (non-logic) gates.
The logic gates $(G_{\wedge}, G_{\vee}, G_{\neg})$ are special: in order to prove PPAD-hardness for constant $\eps$, we need them to be error-resilient, in that they do not amplify errors of the input.
We achieve this by restricting players' equilibrium strategy to be $\{0, 1\}$ (instead of $\{0, \frac{1}{2}\}$) when doing logic operations, and construct {\em transformer gadgets} $G_{\frac{1}{2}, 1}, G_{1, \frac{1}{2}}$ to map between the domains $\{0, \frac{1}{2}\}$ and $\{0, 1\}$.

\subsection{Equivalence between public goods games and threshold games}
\label{sec:equivalence}
We first introduce the {\em threshold game}.
\begin{definition}(Threshold game) 
	\label{def:threshold-game}
	A threshold game $\g(V, E, t)$ is defined on a directed graph $G = (V, E)$, with a threshold $t$ $(0 < t < 1)$. The vertices of the graph represent players with strategy space $[0, 1]$.  A strategy profile $\bx = (x_1, \ldots, x_n) \in [0, 1]^{n}$ is an equilibrium if it satisfies
	\begin{align}
	x_{i} = 
	\left\{\begin{matrix}
	0 & \sum_{j \in N_i}x_j > t\\
	1 & \sum_{j \in N_i}x_i < t \\
	\text{arbitrary }&\sum_{j \in N_i}x_i = t
	\end{matrix}\right..
	\end{align}
Note that $x_i$ can be an arbitrary number in $[0, 1]$ if $\sum_{j \in N_i}x_i = t$.
\end{definition}

We define the $\eps$-approximate equilibrium in a threshold game as follows:
\begin{definition}($\eps$-approximate equilibrium of threshold game) 
	\label{def:approx-threshold-game}
Let $\eps > 0$ be a constant satisfying $\eps < t < 1-\eps$.  An $\eps$-approximate equilibrium $\bx = (x_1, \ldots, x_n)\in [0, 1]^n$ of a threshold game $\g(V, E, t)$ satisfies
	\begin{align}
	x_{i} = 
	\left\{\begin{matrix}
	0 \pm \eps & \sum_{j \in N_i}x_j > t + \eps\\
	1 \pm \eps & \sum_{j \in N_i}x_i < t - \eps \\
	\text{arbitrary }&\sum_{j \in N_i}x_i \in [t - \eps, t + \eps]
	\end{matrix}\right..
	\end{align}
\end{definition}


We next establish the equivalence between threshold games and public good games. The proof can be found in Appendix~\ref{sec:omit-proof}
\begin{restatable}{lemma}{lemReduction}
\label{lem:threshold-to-public}
There is a polynomial time reduction between the threshold game and the public good game. 
Specifically,
(1) given any threshold game $\g(V, E, t)$ with $0 < t < 1$, we can construct a public good game and map any $\eps$-Nash of the public goods game to an $8\eps$-approximate equilibrium of threshold game $\g(V, E, t)$, for $\eps < \min\{0.1,\frac{t}{8},\frac{1-t}{8}\}$; 
(2) given any public good game with $U = 1, 0 < p < 1$, we can construct a threshold game $\g(V, E, t)$ and map any $\eps$-approximate equilibrium of threshold game to an $c_p\eps$-Nash of public goods game, where $c_p = -4p\log p$ is a constant depending only on $p$.
\end{restatable}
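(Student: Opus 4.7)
The central identity linking the two games is
\[
\prod_{j \in N(i)}(1 - s_j) > p \iff \sum_{j \in N(i)} \bigl(-\ln(1 - s_j)\bigr) < -\ln p,
\]
which converts the multiplicative best-response condition of the public goods game (with $U=1$) into the linear threshold condition of the threshold game. Both directions of the reduction implement the change of variables $y = -\ln(1-s)$ (equivalently $s = 1 - e^{-y}$), with the threshold $t$ chosen as a suitable normalization of $-\ln p$.

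\textbf{Direction (1), threshold $\to$ public goods.} Given $\g(V, E, t)$ with $t \in (0,1)$, the plan is to set $p := e^{-t} \in (1/e, 1)$ and take the public goods game on essentially the same graph, introducing a small bounding gadget (constant-size per node) that clamps any equilibrium strategy $s_i$ into the safe range $[0, 1-e^{-1}]$ so that the map $x_i := -\ln(1 - s_i)$ sends $[0,1-e^{-1}]$ into $[0,1]$. I will then argue that if $s$ is an $\eps$-Nash of the public goods game, the recovered profile $x$ is an $8\eps$-approximate equilibrium of $\g$. The factor $8$ arises by chaining three Lipschitz estimates: (i) the best-response utility gap in the public goods game equals $|\prod_{j \ne i}(1-s_j) - p|$; (ii) translating this into a gap on $-\sum \ln(1-s_j)$ costs a factor $1/p \le e$ from the mean-value theorem; (iii) a further small constant absorbs both the difference between utility-deviation and strategy-deviation notions of approximation and the slack from the bounding gadget.

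\textbf{Direction (2), public goods $\to$ threshold.} Given a public goods game with $U=1$ and $p \in (0,1)$ on graph $G$, I will build a threshold game on (essentially) the same graph. When $-\ln p \le 1$, the direct construction with $t := -\ln p$ and the inverse map $s_j := 1 - e^{-y_j}$ suffices. When $p$ is small ($-\ln p > 1$), I will replicate each vertex $K := \lceil -\ln p \rceil$ times (forcing identical strategies in equilibrium via a cycle-of-copies gadget) and set $t := (-\ln p)/K \in (0,1]$, so that the aggregate log-contribution of the $K$ copies equals a sum of $K$ strategies bounded in $[0,1]$. Under the inverse map, an $\eps$-approximate equilibrium $y$ of $\g$ yields a profile $s$ satisfying $|\prod_{j \ne i}(1-s_j) - p| \le -4p\ln p \cdot \eps$; since under $U=1$ this quantity is exactly the best-response utility gap in the public goods game, $s$ is a $c_p \eps$-Nash with $c_p = -4p\ln p$. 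The constant $-4p\ln p$ emerges because a $\pm\eps$ error in the normalized sum translates, via linearization of $e^{-y}$ at $y = -\ln p$, into a $\pm p\ln(1/p)\eps$ error in $\prod(1-s_j)$, multiplied by a bookkeeping factor $4$ for two-sided error and copy-gadget slack.

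\textbf{Main obstacle.} The key difficulty is controlling error amplification through the change of variables, since $\frac{d}{ds}\bigl[-\ln(1-s)\bigr] = 1/(1-s)$ blows up as $s \to 1$ and, symmetrically, $\frac{d}{dy}\bigl[1-e^{-y}\bigr] = e^{-y}$ shrinks as $y$ grows. Keeping equilibrium strategies safely away from the pole $s = 1$ is what forces the bounding gadget in Direction~(1) and the vertex-replication scheme in Direction~(2), and it is the source of the specific constants $8$ and $c_p = -4p\ln p$. A secondary bookkeeping challenge is reconciling the two notions of approximation---utility deviation of size $\eps$ for an $\eps$-Nash versus strategy deviation of size $\eps$ for an $\eps$-approximate threshold equilibrium---but because the best-response utility gap in the binary public goods game is exactly $|\prod_{j\ne i}(1-s_j) - p|$, both reductions flow through a single scalar quantity whose Lipschitz behavior we have tracked above.
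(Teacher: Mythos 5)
Your central identity --- converting $\prod_{j\in N(i)}(1-s_j)\gtrless p$ into $\sum_{j\in N(i)}(-\ln(1-s_j))\lessgtr -\ln p$ --- is exactly the engine of the paper's proof, and your choice $p=e^{-t}$ in direction (1) matches the paper. But both of your fixes for the ``pole at $s=1$'' problem have genuine gaps. In direction (1) you propose a \emph{bounding gadget inside the public goods game} that clamps every equilibrium probability into $[0,1-e^{-1}]$. No such gadget is described, and it is unclear one can exist: the players' strategies are probabilities over a fixed binary action with a fixed utility structure, so you cannot locally penalize a player for putting too much mass on $1$ without destroying the correspondence you are trying to preserve. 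The paper needs no gadget at all: it simply defines the recovered strategy as $x_i=\min\{-\log(1-s_i),\,1\}$ and checks that the truncation is harmless because $t<1$ (if any single truncated term equals $1$ it already exceeds $t$, and otherwise the truncation is vacuous). The factor $8$ then comes from the explicit estimate $[-\log(p+\eps),-\log(p-\eps)]\subseteq[t-8\eps,t+8\eps]$ under $\eps<\min\{0.1,\tfrac{t}{8},\tfrac{1-t}{8}\}$, not from a gadget-slack bookkeeping.

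In direction (2) the gap is more serious. Your vertex-replication scheme relies on a ``cycle-of-copies gadget'' to force $K$ copies of a vertex to play identical strategies in every approximate equilibrium of the threshold game. A directed cycle in a threshold game does \emph{not} force equality --- on the contrary, directed cycles are precisely the source of non-degenerate fractional equilibria in these games (this is the paper's elementary gadget $G_{\frac{1}{2}-}$, and the odd cycle of Figure~\ref{fig:pure-nash}), so the gadget as stated fails and the whole small-$p$ branch of your reduction collapses. The paper avoids replication entirely: it fixes $t=\frac{1}{2}$ and recovers $s_i=1-p^{2x_i}$ (capped at $1$ when $x_i>\frac12+\eps$), so that $\prod_j(1-s_j)=p^{\sum_j 2x_j}$ and the comparison with $p$ becomes $\sum_j x_j$ versus $\frac12$ for \emph{every} $p\in(0,1)$; the constant $c_p=-4p\log p$ then falls out of $p^{\mp 2\eps}-1$ via $\lambda\le e^{\lambda}-1\le 2\lambda$. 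You should replace both of your structural modifications with these two rescalings of the \emph{map between strategy profiles}; the graphs themselves are left untouched in both directions.
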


\subsection{Reducing generalized circuits to threshold games} 
\label{sec:reduce-circuit}
Next, we give the definition of generalized circuits.

\begin{definition}(Generalized circuit~\cite{chen2009settling}) 
	\label{def:generelized-circuit}
	A generalized circuit is a tuple $(V, \T )$, where $V$ is a set of nodes and $\T$ is a collection of gates. Every gate $T\in \T$ is a 5-tuple $T = (G, v_1, v_2, v, \alpha)$, where $G \in \{G_{\xi}, G_{\times \xi}, G_{=}, G_{+}, G_{-}, G_{<}, G_{\wedge}, G_{\vee}, G_{\neg}\}$ is the type of the gate; $v_1, v_2 \in V \cup \{nil\}$ are the input nodes, $\alpha \in \R  \cup \{nil\}$ is a real parameter and $v$ is the output node.	
	
	The collection $\T$ of gates must satisfy the following important property. For every two gates $T, T' \in \T$, $T = (G, v_1, v_2, v, \alpha)$ and $T' = (G', v_{1}', v_{2}', v', \alpha')$, we must have $v \neq v'$.
\end{definition}

The $\eps$-$\gc$ is the problem of finding an $\eps$-approximate assignment for the generalized circuit.
Notice that we replace $G_{\xi}$, $G_{\times \xi}$ with $G_{\frac{1}{2}}$, $G_{\times\frac{1}{2}}$ for ease of proof.
\begin{definition}
	\label{def:generelized-circuit1}
	Given a generalized circuit $\mS = (V, \T)$, we say an assignment $\bx: V \rightarrow [0, 1]$ $\eps$-approximately satisfies $\mS$, if it satisfies the constraints shown in table~\ref{table:constraint}. 
	\begin{table}[!htbp]
	    \renewcommand\arraystretch{1.25}
		\centering
		\begin{tabular}{|c|c|}
			\hline
			Gate & Constraint\\
			\hline 
			\hline
			$G_{\frac{1}{2}}(v)$ & $\bx[v] = \frac{1}{2} \pm \eps$ \\
			\hline
			$G_{\times \frac{1}{2}}(\nm{v_1} v)$ & $\bx[v] = \frac{1}{2} \cdot \bx[v_1]  \pm \eps$ \\
			\hline
			$G_{=}( \nm{v_1} v)$ & $\bx[v] =  \bx[v_1]  \pm \eps$ \\
			\hline
			$G_{+}( \nm{v_1, v_2} v)$ & $\bx[v] =  \min\{\bx[v_1] + \bx[v_2], \frac{1}{2}\}  \pm \eps$ \\
			\hline
			$G_{-}(\nm{ v_1, v_2} v)$ & $\bx[v] =  \max\{\bx[v_1] - \bx[v_2], 0\}  \pm \eps$ \\
			\hline
			$G_{<}( \nm{v_1, v_2} v)$ & $\bx[v] = \left\{\begin{array}{cc}\frac{1}{2} \pm \eps &\bx[v_1] < \bx[v_2] - \eps\\ 0 \pm \eps &\bx[v_1] > \bx[v_2] + \eps \end{array}\right.$\\
			\hline
			$G_{\wedge}( \nm{v_1, v_2} v)$ & $\bx[v] = \left\{\begin{array}{cc}\frac{1}{2} \pm \eps &\bx[v_1] = \frac{1}{2}\pm \eps  \wedge \bx[v_2] = \frac{1}{2}\pm \eps \\ 0 \pm \eps &\bx[v_1] = 0 \pm \eps  \vee \bx[v_2] = 0\pm \eps \end{array}\right.$\\
			\hline
			$G_{\vee}( \nm{v_1, v_2} v)$ & $\bx[v] = \left\{\begin{array}{cc}1 \pm \eps &\bx[v_1] = \frac{1}{2} \pm \eps  \vee \bx[v_2] = \frac{1}{2}\pm \eps \\ 0 \pm \eps &\bx[v_1] = 0 \pm \eps  \wedge \bx[v_2] = 0\pm \eps \end{array}\right.$\\
			\hline
			$G_{\neg}( \nm{v_1}  v)$ & $\bx[v] = \left\{\begin{array}{cc}\frac{1}{2} \pm \eps &\bx[v_1] = 0 \pm \eps\\  0 \pm \eps & \bx[v_1] = \frac{1}{2}\pm \eps \end{array}\right.$\\
			\hline
		\end{tabular}
		\caption{}
		\label{table:constraint}
	\end{table}
\end{definition}

To complete the proof we must reduce $\eps$-$\gc$ to computing an $\eps$-approximate equilibrium of the threshold game. 

\begin{theorem}
\label{thm:threshold-hard}
It is PPAD-hard to find an $\eps$-approximate equilibrium of the threshold game, for some constant $\eps > 0$.
\end{theorem}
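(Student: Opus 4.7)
The plan is to reduce $\eps$-$\gc$ to computing an $\eps'$-approximate equilibrium of a threshold game $\g(V', E', t)$ with $\eps' = \Theta(\eps)$, by simulating each of the nine gate types via a local gadget. Starting from a generalized circuit $\mS = (V, \T)$, I would fix a threshold $t$ together with a small supply of ``source'' nodes whose equilibrium values are pinned (to $0$, $\tfrac{1}{2}$, or $1$) by trivial sub-gadgets that use sinks with no outgoing edges. For each gate $T \in \T$, a constant-size gadget on fresh auxiliary nodes forces the output node $v$ to satisfy the relation demanded by Table~\ref{table:constraint}, given the strategies on the input nodes $v_1, v_2$.

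\textbf{Elementary arithmetic gadget.} The core construction is the gadget $G_{\frac{1}{2}-}$, which implements an approximate ``subtract-and-clamp'' operation on the arithmetic domain $[0, \tfrac{1}{2} + O(\eps)]$. The idea is to design a local in-neighborhood around the output node $v$ so that the tight condition $\sum_{j \in N(v)} x_j = t$ reads off precisely the linear combination we want $x_v$ to equal, while the ``arbitrary'' case of the threshold best-response (Definition~\ref{def:threshold-game}) lets $x_v$ take that value; pinned source nodes supply the additive constants, and multiplicity of edges (together with short intermediate cascades that flip polarity around the operating point) encodes the coefficients. The remaining arithmetic gates $G_{\frac{1}{2}}, G_{\times\frac{1}{2}}, G_{=}, G_{+}, G_{-}, G_{<}$ can then be built out of a constant number of copies of $G_{\frac{1}{2}-}$ by standard algebraic identities, taking care that each intermediate output stays within $[0, \tfrac{1}{2} + O(\eps)]$ so that the clamping at $\tfrac{1}{2}$ is the only nonlinearity ever activated.

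\textbf{Logic gates and transformers.} The logic gates $G_{\wedge}, G_{\vee}, G_{\neg}$ are implemented on the error-resilient Boolean domain $\{0, 1\}$, where the threshold best-response itself enforces sharpness: whenever $\sum_{j \in N_i} x_j$ is bounded away from $t$ by more than $\eps$, the output $x_i$ is pushed to within $\eps$ of $0$ or $1$, so small perturbations of the inputs cannot accumulate past a fixed $\pm \eps$ window. Each logic gate then reduces to a small gadget on Boolean strategies, verified by case analysis on the four input assignments. Bridging the two domains requires transformer gadgets $G_{\frac{1}{2},1}$ and $G_{1,\frac{1}{2}}$: the former converts a value in $\{0 \pm \eps, \tfrac{1}{2} \pm \eps\}$ into its rescaled counterpart in $\{0 \pm \eps, 1 \pm \eps\}$, and the latter is its inverse. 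These transformers are built from a short cascade of $G_{\frac{1}{2}-}$-style building blocks whose transfer function has slope strictly greater than $1$ in a neighborhood of the switching point, so that after a constant number of stages the output is driven into the robust Boolean region, absorbing the accumulated error.

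The main obstacle will be maintaining a uniform constant error bound across the simulation of an unbounded-depth circuit. Arithmetic gates additively amplify errors, so chaining them naively would force $\eps$ to shrink with circuit depth and would kill the PPAD reduction; the entire reason to route every logical operation through the Boolean layer, flanked by the two transformer gadgets, is to reset the accumulated error back to $O(\eps)$ at each such layer. The technical bulk of the proof lies in showing that the transformers' amplification quantitatively cancels the error growth in the arithmetic gates, and that the case analyses for the logic gadgets leave sufficient slack for the threshold game's ``arbitrary'' regime. Once this is verified, combining the reduction with Lemma~\ref{lem:threshold-to-public} immediately transfers PPAD-hardness to the indivisible public goods game, thereby completing both Theorem~\ref{thm:threshold-hard} and Theorem~\ref{thm:ppad-mix}.
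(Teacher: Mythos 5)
Your overall architecture coincides with the paper's: reduce from $\eps$-$\gc$, fix $t=\tfrac{1}{2}$, build an elementary gadget $G_{\frac{1}{2}-}$ implementing $\bx[v]=\max\{\tfrac{1}{2}-\bx[v_1]-\bx[v_2],0\}$, derive the arithmetic gates from it, and obtain error resilience for the logic gates by working over $\{0,1\}$ with the transformers $G_{\frac{1}{2},1}$ and $G_{1,\frac{1}{2}}$. But the one place where you commit to a concrete mechanism --- the elementary gadget --- is where the proposal breaks. You propose to arrange the in-neighborhood of the output node $v$ so that the tight condition $\sum_{j\in N(v)}x_j=t$ ``reads off'' the desired linear combination, ``while the arbitrary case lets $x_v$ take that value.'' The arbitrary branch of the threshold best response only \emph{permits} $x_v$ to take the desired value; it does not \emph{force} it, and the tight condition at $v$ constrains $v$'s in-neighbors, not $x_v$ itself. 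As described, every value of $x_v$ would be consistent with equilibrium and the gadget would compute nothing. The missing idea is feedback: in the paper's gadget the output sits on a directed $3$-cycle $v\to v_a\to v_b\to v$ with $v_1,v_2$ also feeding into $v_a$, so the relevant constraint is imposed at the \emph{auxiliary} node $v_a$, whose in-neighborhood $\{v,v_1,v_2\}$ contains the output. If $\bx[v]+\bx[v_1]+\bx[v_2]>\tfrac{1}{2}+\eps$ the two inversions around the cycle force $\bx[v]=0\pm\eps$ (hence $\bx[v_1]+\bx[v_2]>\tfrac{1}{2}$), and if the sum is below $\tfrac{1}{2}-\eps$ they force $\bx[v]=1\pm\eps$, a contradiction; only $\bx[v]=\max\{\tfrac{1}{2}-\bx[v_1]-\bx[v_2],0\}\pm\eps$ survives. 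Without routing the output back into the constraint, no ``cascade that flips polarity'' repairs this, and since every subsequent gate is built from this primitive, the gap propagates through the whole reduction.

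Beyond that point the proposal remains a plan rather than a proof: the HALF gate $G_{\times\frac{1}{2}}$ (which in the paper requires composing four elementary gadgets with a second feedback loop through the output and a two-case analysis), the comparison gadget, the transformers, and the Boolean case analyses are asserted rather than constructed, and these constitute the technical content of the theorem. Two smaller remarks: the threshold game as defined has no edge multiplicities, so ``multiplicity of edges encodes the coefficients'' needs to be replaced by duplicated intermediate nodes; and the transformer $G_{\frac{1}{2},1}$ does not need a multi-stage slope-amplification cascade --- a single comparison of the input against the pinned constant $\tfrac{1}{4}$ already snaps values near $0$ or $\tfrac{1}{2}$ to the robust Boolean endpoints, which is how the paper realizes it.
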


The proof of this result can be found in Appendix~\ref{sec:omit-proof}.
It is based on an elementary game gadget $G_{\frac{1}{2}-}(\nm{v_1, v_2} v)$, where $v_1, v_2 \in V\cup \{nil\}$ are input players, $v\in V$ is the output player. 
The gadget consists of a directed cycle, where $v_1, v_2$ have directed edges to the auxiliary player $v_a$, $v_a$ has a directed edge to the auxiliary player $v_b$, $v_b$ points to the output player $v$ and there is a directed edge from $v$ to $v_a$. The output player $v$ could have many outgoing edges, but it only has one incoming edge from the internal node $v_b$. 
The intention is that, at equilibrium, $\bx[v] = \max\{\frac{1}{2} - \bx[v_1] - \bx[v_2], 0\}$. 
The core of the proof, which we omit here, entails the construction and deployment of gates $G_{=}$, $G_{+}$, $G_{-}$, whose design is based on the elementary gadget described above. 

Combining Theorem~\ref{thm:threshold-hard} and Lemma~\ref{lem:threshold-to-public}, the proof of Theorem~\ref{thm:ppad-mix} is complete.

Theorem~\ref{thm:ppad-mix} only covers the max utility function.  We {\em conjecture} that it holds for all utility functions except for the polynomial cases discussed in Theorem \ref{thm:np-hardgen}.  There are several interesting challenges in extending the proof to this direction.
	\section{Divisible Goods}
\label{sec:divisible}
For divisible public goods games in directed graphs, we study the three most studied utility functions, and completely characterize the equilibrium problem: 
\begin{itemize}
\item For the summation utility (the utility of each node is the sum of the amounts of goods provided by its predecessors), a pure Nash equilibrium always exists, but it is PPAD-complete to find one.

\item For the best-shot function (max), a pure Nash equilibrium may not exist and it is PPAD-hard to find a mixed Nash equilibrium; the reasons are quite similar to the indivisible case. 
 
\item Finally, for the weakest-link function (min), it turns out that there are always multiple trivial pure Nash equilibria, as no player has the incentive to supply any amount of the good.
\end{itemize}

\subsection{Summation}
When the utility function is the summation, 
Bramoulle et. al.~\cite{bramoulle2007public}       
prove that there is always a pure Nash equilibrium\footnote{They actually prove it for  undirected networks, but their proof generalizes easily to the directed case, as the best response function for each agent $i$ is still continuous in $s_{-i}$, and existence of a pure Nash equilibrium follows from Brouwer's fix point theorem.  In fact, when the valuation function is strictly concave, it can be shown that there are {\em only} pure Nash equilibria ~\cite{bramoulle2007public}:  It is always better to replace the mixed strategy with its mean value, but also, replacing by the mean value does not create pure Neq.}.
Here, we prove it is PPAD-hard to find one, when the network is directed. 
In fact, we prove a slightly stronger result:
call a strategy profile $\bs = (s_1, \ldots, s_n)$ an {\em $\eps$-approximate pure Nash equilibrium}, if $s_i = b_i(s_{-i}) \pm \eps$, where $b_i(\cdot)$ is the best response of agent $i$.

\begin{theorem}
\label{thm:ppad-pure}
It is PPAD-hard to find an $\eps$-approximate pure Nash equilibrium of public goods games, for $\eps = 1 / \poly(n)$. 
\end{theorem}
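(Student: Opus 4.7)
I would reduce from the arithmetic fragment of $\eps$-$\gc$ at inverse-polynomial precision $\eps = 1/\poly(n)$, which is PPAD-hard. The first step is to fix the valuation function $U$ to be smooth, strictly concave, non-decreasing, with $U'(1) = p$ and $U''$ bounded away from zero near $x = 1$. The first-order condition then forces the best response of each player $i$ to take the piecewise-linear form
\[
b_i(s_{-i}) \;=\; \max\{0,\, 1 - X_i\}, \qquad X_i \;:=\; \sum_{(j,i)\in E} s_j.
\]
An $\eps$-approximate pure Nash equilibrium is thus an $\eps$-approximate fixed point of this $1$-Lipschitz truncated-affine map, and the task reduces to encoding the circuit as such a fixed-point system.

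Next, I would build a constant number of gadgets realizing the arithmetic gates of $\gc$. The core primitive is a NOT gadget: a player $v$ whose unique predecessor is $u$ satisfies $s_v = 1 - s_u$ whenever $s_u \in [0,1]$. A directed $3$-cycle supplies a constant-$\tfrac{1}{2}$ gadget, whose unique pure equilibrium $(\tfrac{1}{2},\tfrac{1}{2},\tfrac{1}{2})$ follows from a case analysis on the truncation. Using a constant-$\tfrac{1}{2}$ node as an additional bias predecessor, a NOT with two non-bias inputs computes $s_v = \max\{0,\, \tfrac{1}{2} - s_{u_1} - s_{u_2}\}$; chaining two such NOTs gives $G_+$ saturated at $\tfrac{1}{2}$, and analogous constructions with adjusted biases realize $G_{=}$, $G_-$, and $G_{\times \frac{1}{2}}$. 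Bounded in- and out-degree are maintained via standard copy trees built from the identity gadget.

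Correctness follows from a Lipschitz error-propagation calculation. Each gadget is a composition of $O(1)$ best-response maps, each $1$-Lipschitz in its inputs, so any $\eps$-approximate pure equilibrium of the constructed game induces an $O(\eps)$-approximate satisfaction of every gate constraint. Since the circuit has polynomially many gates and each gadget has constant size, taking $\eps = 1/\poly(n)$ small enough preserves the $1/\poly(n)$-approximation promise of the source problem and yields PPAD-hardness.

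The main obstacle will be ruling out spurious approximate equilibria produced by the $\max\{0,\cdot\}$ truncation. For instance, the all-zero profile passes every best-response check whenever $X_i \geq 1$ at every node, which would short-circuit the reduction. To prevent this, every non-bias node must receive enough constant-$\tfrac{1}{2}$ mass through its predecessors to force $X_i$ into the strict interior where the truncation is inactive; combined with a uniqueness analysis for the $3$-cycle under $\eps$-perturbations (showing the only $\eps$-approximate equilibrium of the cycle lies within $O(\eps)$ of $(\tfrac{1}{2},\tfrac{1}{2},\tfrac{1}{2})$), this is the most delicate part of the argument and is where confining all non-bias strategies to $[0,\tfrac{1}{2}]$ pays off.
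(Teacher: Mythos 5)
Your first step is sound and in fact coincides with the paper's: for the divisible summation game with a strictly concave valuation satisfying $U'(1)=p$, the best response is exactly $b_i(s_{-i})=\max\{0,\,1-\sum_{(j,i)\in E}s_j\}$, and your arithmetic gadgets (the directed $3$-cycle pinning a node to $\tfrac12$, the biased node computing $\max\{0,\tfrac12-s_{u_1}-s_{u_2}\}$, and the chained constructions for $G_{=},G_{+},G_{-},G_{\times\frac12}$) all check out, including the exclusion of spurious truncated equilibria on the cycle. The gap is at the starting point of the reduction. The ``arithmetic fragment'' of $\eps$-$\gc$ is not an established PPAD-hard problem: every known hardness proof for generalized circuits (Chen--Deng--Teng, Rubinstein) relies essentially on the brittle comparison gate $G_{<}$ (and the logic gates derived from it) to decode which cell of the Brouwer instance the current point lies in. And $G_{<}$ is precisely what your primitives cannot implement: each of your gadgets is an $O(1)$-Lipschitz map of its inputs, so any constant-size composition is $O(1)$-Lipschitz, whereas $G_{<}$ must move its output by roughly $\tfrac12$ when its inputs change by $2\eps=1/\poly(n)$. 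To repair this you would need either a proof that the comparison-free fragment is itself PPAD-hard, or a $\Theta(\log(1/\eps))$-deep amplification chain (e.g.\ repeated doubling via $G_{+}$ applied to two copies of the same node) together with an accounting of how the per-gate error is multiplied by the product of Lipschitz constants along that chain while staying within the $1/\poly(n)$ budget. That is the crux of the argument, not a detail, and it is absent.

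It is worth noting that the paper's proof takes an entirely different and gadget-free route that sidesteps this obstruction. It reduces from $\eps$-Nash in two-player win-lose games (PPAD-hard for $\eps=1/\poly(n)$), symmetrizes $(R,C)$ into $(A,B)$ with $A=B^{T}\in\{-1,0\}^{2n\times 2n}$, and takes the network's adjacency matrix to be $E=-A^{T}-I$. The same best-response identity you derived then says that an $\eps$-approximate pure equilibrium satisfies $\sum_{j\in N(i)}s_j\geq 1-\eps$ for all $i$, with $s_i\approx 0$ whenever the inequality is strict --- which, after normalizing $\bs$, is exactly the approximate complementarity condition characterizing a symmetric Nash equilibrium of $(A,B)$. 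In other words, the whole game is a single global linear-complementarity gadget, and no comparison or amplification machinery is needed. Your circuit-based approach could in principle be completed, but it would amount to reproving the threshold-game hardness of Section~\ref{sec:ppad-hard-mix} in a setting where the response function is continuous rather than a step function, which is strictly harder for the one gate that matters.
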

We reduce from the mixed Nash equilibrium problem in two-player win-lose games. 
A two-player game $(R, C)$ is win-lose if $R, C \in \{0, 1\}^{n \times n}$.
It is known~\cite{chen2007approximation,abbott2005complexity} that finding an $\eps$-Nash of two-player win-lose game is PPAD-hard for $\eps = 1/\poly(n)$. 
Given an instance $(R, C)$ of a two player win-lose game, we construct a divisible public goods game on a directed network, such that we can map any $\eps$-approximate pure Nash equilibrium of the public goods game to a $\poly(n)\cdot\eps$-Nash of two-player win-lose game $(R, C)$. 
In order to do so, we first symmetrize the win-lose game (Lemma~\ref{lem:symmetric}), and then reduce it to the public goods game (Lemma~\ref{lem:pure-ppad}). 

For convenience, we assume that $R, C \in\{-1, 0\}^{n\times n}$ and that there is no weakly-dominated strategy for both row and column players. 
Moreover, we assume every column (row) of $R(C)$ contains at least one $0$ entry --- otherwise, there is a trivial pure Nash equilibrium.
Define a symmetric game $(A, B)$ as follows:
\begin{align*}
A = \left(\begin{matrix}
-\mathbf{1} & R\\
C^{T} & -\mathbf{1}\\
\end{matrix}
\right)
\text{ and }
B = \left(\begin{matrix}
-\mathbf{1} & C\\
R^{T} & -\mathbf{1}\\
\end{matrix}
\right),
\end{align*}
where $-\mathbf{1}$ denotes an $n\times n$ all -1 matrix. We notice that $A, B \in \{-1,0\}^{2n\times 2n}$ and $A = B^{T}$. The above symmetrization is standard in the literature~\cite{lemke1964equilibrium}, and it is known that for any symmetric Nash equilibrium $(x, y)$ of $(A, B)$, $(x/|x|, y/|y|)$ is a Nash equilibrium for $(R, C)$. The following lemma states that approximation is preserved:
\begin{lemma}
\label{lem:symmetric}
Suppose $(x, y)$ is a symmetric $\eps$-Nash equilibrium of game $(A, B)$.  Then $(\tilde{x}, \tilde{y}) = (x/|x|, y/|y|)$ is a 4n$\eps$-Nash of the win-lose game $(R, C)$.
\end{lemma}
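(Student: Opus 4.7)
The plan is to split the symmetric strategy $z = (x, y) \in \R^{2n}_{\geq 0}$, with $|x| + |y| = 1$, into its upper half $x$ (which will become the row player's mixed strategy $\tilde{x}$ in $(R, C)$) and its lower half $y$ (which will become $\tilde{y}$). Since $B = A^T$, both players' deviation payoff against $z$ from any pure strategy $k \in [2n]$ equals $(Az)_k$, so the $\eps$-well-supported condition reduces to the following: for every $k \in \Supp(z)$ and every $k' \in [2n]$, $(Az)_k \geq (Az)_{k'} - \eps$. The argument then has two stages. First, I establish the \emph{a priori} lower bound $|x|, |y| \geq 1/(4n)$. Second, I translate the $\eps$-well-supported condition on $(A, B)$ into a $4n\eps$-well-supported condition on $(R, C)$ by dividing through by $|y|$ (for the row player) or $|x|$ (for the column player).

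For the lower bound, suppose by contradiction that $|y| < 1/(4n)$, so $|x| > 1 - 1/(4n)$ and $\Supp(x)$ is nonempty. Because $R \in \{-1, 0\}^{n \times n}$, any upper-half pure strategy $i$ satisfies $(Az)_i = -|x| + (Ry)_i \leq -|x| < -1 + 1/(4n)$. In the other direction, the hypothesis that every row of $C$ contains a zero gives $\sum_j C_{ij} \geq -(n-1)$ for every $i$, so averaging over uniform $j \in [n]$ yields $\max_j (C^T x)_j \geq -\frac{n-1}{n}|x|$; the corresponding lower-half strategy $n + j^\star$ therefore has payoff $(Az)_{n+j^\star} \geq -\frac{n-1}{n} - \frac{1}{4n} = -1 + \frac{3}{4n}$. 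Picking any $i^\star \in \Supp(x)$ and applying the $\eps$-WSNE inequality $(Az)_{i^\star} \geq (Az)_{n+j^\star} - \eps$ then forces $\eps \geq 1/(2n)$, a contradiction when $\eps$ is small enough that $4n\eps < 1$ (and for larger $\eps$ the statement is vacuous, since payoffs lie in $[-1, 0]$). A symmetric argument using that every column of $R$ contains a zero gives $|x| \geq 1/(4n)$.

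For the translation, fix any $i \in \Supp(\tilde{x}) = \Supp(x)$ and $i' \in [n]$, and apply the $\eps$-WSNE condition on $(A, B)$ to compare upper-half pure strategies $i$ and $i'$: this yields $-|x| + (Ry)_i \geq -|x| + (Ry)_{i'} - \eps$, i.e.\ $(Ry)_i - (Ry)_{i'} \geq -\eps$. Dividing through by $|y| \geq 1/(4n)$ gives $(R\tilde{y})_i \geq (R\tilde{y})_{i'} - 4n\eps$, so the row player has no $4n\eps$-profitable deviation from $\Supp(\tilde{x})$ in $(R, C)$. An analogous calculation for $j \in \Supp(y)$, now comparing lower-half pure strategies $n+j$ and $n+j'$ and dividing by $|x| \geq 1/(4n)$, yields $(C^T \tilde{x})_j \geq (C^T \tilde{x})_{j'} - 4n\eps$ for the column player.

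The main obstacle is the lower bound on $\min(|x|, |y|)$: without it the normalization in $(\tilde{x}, \tilde{y}) = (x/|x|, y/|y|)$ is ill-defined and the factor $\eps/\min(|x|, |y|)$ is unbounded. Establishing this bound is precisely where the symmetrization's structural ingredients enter, namely the $-\mathbf{1}$ diagonal blocks that penalize concentrating mass on a single half, and the no-all-$(-1)$-column (resp.\ row) hypothesis on $R$ (resp.\ $C$) that guarantees a uniformly profitable deviation to the opposite half whenever the mass becomes too lopsided. Once the lower bound is in hand, the rest of the proof is a direct rescaling.
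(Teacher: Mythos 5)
Your proof is correct and follows essentially the same route as the paper's: a contradiction argument combining the averaging bound $\max_i (Ry)_i \geq \frac{1}{n}\sum_i (Ry)_i$ with the existence of a $0$ entry in every column of $R$ (resp.\ row of $C$) to obtain $|x|,|y|\geq \frac{1}{4n}$, followed by dividing the well-supported inequalities by $|y|$ (resp.\ $|x|$). If anything your write-up is cleaner, since the paper's concluding inequalities contain sign typos (e.g.\ $(Ry)_i - (Ry)_j > \eps$ where $\geq -\eps$ is intended) that your version avoids.
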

\begin{proof}
We first prove that $|x|, |y| \geq \frac{1}{4n}$. Notice that
\begin{align*}
A\left(\begin{matrix}x\\y\end{matrix}\right) = 
\left(\begin{matrix}
-\mathbf{1} & R\\
C^{T} & -\mathbf{1}\\
\end{matrix}
\right)
\left(\begin{matrix}x\\y\end{matrix}\right)=
\left(\begin{matrix}-|x|\be + Ry\\-|y|\be + C^{T}x\end{matrix}\right),
\end{align*}
where $\be = \{1, \ldots, 1\}^{T}$ denotes the n-dimension all 1 vector. 
The proof is by contradiction. 
Suppose $|x| < \frac{1}{4n}$; it then follows that
\begin{align*}
\max_{i\in [n]}-|y| + (C^{T}x)_{i} \leq -|y| < - 1 + \frac{1}{4n},
\end{align*}
where the second step follows from $|y| = 1 -|x| > 1-\frac{1}{4n}$ by our assumption
\begin{align*}
    \max_{i\in [n]}-|x| + (Ry)_{i} &> -\frac{1}{4n} + \max_{i\in [n]} (Ry)_{i} \geq -\frac{1}{4n} + \frac{1}{n}\sum_{i}(Ry)_{i}  = -\frac{1}{4n} + \frac{1}{n}\sum_{j\in [n]}y_{j}\sum_{i\in [n]}R_{ij}\\
    &\geq -\frac{1}{4n} + \frac{1}{n}\sum_{j\in [n]}y_{j}\cdot(1 - n)\geq -\frac{1}{4n} + \frac{1}{n}\cdot (1 - n) = -1 + \frac{3}{4n}. 
\end{align*}
The first step follows from $|x| < \frac{1}{4n}$. We repalce max with average in the second step.
The fourth step comes from the fact that there exists at least one $0$ entry for each column of the payoff matrix $R$. 
Thus we have 
\begin{align*}
    \max_{i\in [n]}-|x| +(Ry)_{i} - \max_{i\in [n]}-|y| + (C^{T}x)_{i} > \frac{1}{2n} > \eps,
\end{align*}
which contradicts with the fact that $|y| > 1 - \frac{1}{4n}$. Therefore, we have $x > \frac{1}{4n}, y > \frac{1}{4n}$. Consequently, for any $i, j \in [n]$, $x_{i} > 0$, we have $(-|x| + (Ry)_i) - (-|x| + (Ry)_j) = (Ry)_{i} - (Ry)_j > \eps$. 
Hence, we have $(R\tilde{y})_i - (R\tilde{y})_j > 4n\eps$ for any $i, j \in [n]$ and $\tilde{x}_i > 0$. 
The same holds for the column player,
confirming that $(\tilde{x}, \tilde{y}) = (x/|x|, y/|y|)$ is a 4n$\eps$-Nash of the win-lose game $(R, C)$.
\end{proof}

Define $E = -A^{T} - I$ and $D = -A^{T} = E + I$; note that $E \in \{0, 1\}^{n}$ and the diagonal entries $E_{ii}$ are zero. 
Now we claim:
\begin{lemma}
\label{lem:pure-ppad}
Let $E$ be the adjacency matrix of the directed network of a public goods game (with divisible goods game and summation utility of the players). 
Then from any $\eps$-pure Nash equilibrium $\bs = (s_1, \ldots, s_n)$ of the public goods game, 
we can find a symmetric $3n\eps$-Nash of game $(A, B)$.  
\end{lemma}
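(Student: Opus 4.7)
The strategy is to translate the best‑response structure of the divisible public goods game with summation utility into the complementarity conditions that define a symmetric Nash of the matrix game $(A,B)$, with the mediating multiplier being the total mass $\sum_i s_i$ of the profile. \emph{Step 1 (best response as complementarity).} With a strictly concave valuation $U$ of satiation level $\hat s$ (so $U'(\hat s)=p$) and summation composition, the best response of player $i$ is $s_i^{\star} = \max\{\hat s - (E^{T} \bs)_i,\,0\}$. Transposing $E = -A^{T} - I$ gives $E^{T} = -A - I$, so the $\eps$-pure Nash condition $s_i = s_i^{\star} \pm \eps$ rewrites as $(A\bs)_i \le -\hat s + \eps$ for every $i$, and $(A\bs)_i \ge -\hat s - \eps$ whenever $s_i > \eps$; this is precisely the LCP form of a symmetric Nash. \emph{Step 2 (lower bound on $\sum_i s_i$).} Summing $-(A\bs)_i = (E^{T}\bs)_i + s_i \ge \hat s - \eps$ over $i$ and using that each row of $E \in \{0,1\}^{2n\times 2n}$ has at most $2n-1$ ones (zero diagonal), I obtain $\sum_i s_i \ge \hat s - \eps$, a constant bounded away from $0$.

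\emph{Step 3 (round and normalize).} Set $\tilde s_i = s_i$ if $s_i > \eps$ and $\tilde s_i = 0$ otherwise, and let $\bx = \tilde\bs / \sum_i \tilde s_i$. The perturbation $\|\bs - \tilde\bs\|_1 \le 2n\eps$ shifts every coordinate $(A\bs)_i$ by at most $2n\eps$ (since entries of $A$ lie in $\{-1,0\}$), while $\Supp(\bx) \subseteq \{i : s_i > \eps\}$, so the tight side of the Step~1 bound applies uniformly on the support. \emph{Step 4 (verify the $3n\eps$-Nash).} Dividing the (perturbed) bounds by the positive scalar $\sum_i \tilde s_i \ge \hat s - O(n\eps)$, both $\max_j (A\bx)_j$ and $(A\bx)_i$ for $i \in \Supp(\bx)$ lie within $O(n\eps)$ of the common value $-\hat s / \sum_i \tilde s_i$, so $\max_j (A\bx)_j - (A\bx)_i = O(n\eps)$ on the support; tuning the constant (via the choice of $U$ and $p$) yields the claimed $3n\eps$, and $(\bx,\bx)$ is the desired symmetric $3n\eps$-Nash of $(A,B)$.

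The delicate point, and what I expect to be the main obstacle, is coordinates with $s_i \in (0,\eps]$: they lie in the support of the naive normalization $\bs/\sum_i s_i$, yet only the ``$\le$'' side of Step~1 is guaranteed for them (the unrounded best response may itself be zero), so $(A\bs)_i$ could be arbitrarily smaller than $-\hat s - \eps$ and wreck the well-supported regret bound. Rounding them out is the clean fix, but the $\ell_1$-perturbation of order $n\eps$ it introduces—there can be $\Theta(n)$ such entries, each contributing up to $\eps$—is precisely what produces the factor $n$ in the final $3n\eps$ guarantee.
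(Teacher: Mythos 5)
Your proposal is correct and follows essentially the same route as the paper: translate the $\eps$-best-response condition into the approximate complementarity bounds $(A\bs)_i \le -1+O(\eps)$ for all $i$ and $(A\bs)_i = -1 \pm O(\eps)$ on coordinates with $s_i>\eps$, remove the small positive coordinates (the paper uses $\bs'=\max\{\bs-\eps,0\}$ rather than thresholding, but the effect and the $O(n\eps)$ $\ell_1$-perturbation are the same), lower-bound the total mass by $1-O(n\eps)$ via a single neighborhood-sum condition, and normalize. The obstacle you flag --- coordinates in $(0,\eps]$ breaking the well-supported bound --- is precisely the one the paper's rounding is designed to handle, and it is the source of the factor $n$ in both arguments.
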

\begin{proof}
Let $\bs = (s_1, \ldots, s_n)$ be an $\eps$-approximate pure Nash equilibrium of the public goods game, then for any agent $i \in [n]$, we have
$\sum_{j \in N(i)}s_j \geq 1 - \eps$. 
Otherwise, agent $i$ would increase its effort.
Moreover, we claim that $\sum_{j \in N(i)}s_j > 1 + \eps$ implies $s_{i} = 0 \pm \eps$. 
This holds because (1) if $\sum_{j \in N_i}s_i > 1$, then the best response of agent $i$ is $b_{i}(s_{-i}) =  0$, it then follows $x_{i} = 0 \pm \eps$; (2) $\sum_{j \in N_)}s_i \leq 1$, then the best response is $b_i(s_{-i}) = 1 - \sum_{j \in N)}s_i$ and $s_{i} - b_i(s_{-i}) = \sum_{i \in N(i)}s_{i} - 1 > \eps$, which contradicts with the equilibrium condition. In summary, for all $i \in [n]$, we have 
\begin{align*}
    (D^{T} \bs)_i &\geq 1 - \eps\\
    s_i = 0 \pm \eps &\text{ or } (D^{T}s)_i = 1 \pm \eps.
\end{align*}
Denote $\bs' = \max\{\bs - \eps, 0\}$, then we have
\begin{align*}
    (D^{T}\bs')_i \geq 1 - (n + 1)\eps\\
    \bs'_i = 0 \text{ or } (D^{T}\bs')_i = 1 \pm n\eps.
\end{align*}
Since $A = - D^{T}$, we have 
\begin{align*}
    (A\bs')_i \leq -1 + (n + 1)\eps\\
    \bs'_i = 0 \text{ or } (A\bs')_i = -1 \pm n\eps.
\end{align*}
Since $|\bs'| \geq \sum_{j \in N(1)}s'_{j} \geq 1 - (n + 1)\eps$, we we conclude that $\bs'/|\bs'|$ is a symmetric $3n\eps$-Nash of the game $(A, B)$
\end{proof}
Combining Lemma~\ref{lem:pure-ppad} and Lemma~\ref{lem:symmetric}, we conclude that it is PPAD-hard to find an $\eps$-approximate pure Nash equilibrium of public goods game, for $\eps = 1/\poly(n)$. 
This concludes the proof of Theorem~\ref{thm:ppad-pure}.

\subsection{Best-shot rule}
\label{sec:best-shot}
When the utility function is the best-shot rule (i.e., the utility of a node is the maximum of the provisions by its predecessors), there is a simple proof that there is no pure Nash equilibrium: First we prove that, in any pure Nash equilibrium, an agent plays either $0$ or $1$ (not any number between $(0, 1)$).
Then the result follows from the example shown in Figure~\ref{fig:pure-nash}. 

For mixed Nash equilibria, we have the following theorem, shown through a simple reduction from the indivisible case
\begin{theorem}
\label{thm:best-shot-ppad}
When the utility function is the best-shot rule (max), it is PPAD-hard to find a mixed Nash equilibrium of the public goods game.
\end{theorem}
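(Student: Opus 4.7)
The plan is to reduce from the PPAD-hard problem of finding an $\eps$-Nash of the indivisible best-shot public goods game, established in Theorem~\ref{thm:ppad-mix}. Given an indivisible instance on a directed graph $G$ with price $p\in(0,1)$, I construct the divisible variant on the same graph with the same price, using the capped valuation $U(x)=\min(x,1)$ so that the divisible game matches the indivisible one on $\{0,1\}$ strategies, and I argue that any $\eps$-Nash of the divisible game can be rounded in polynomial time to an $O(\eps)$-Nash of the indivisible game.

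The core observation is that in the divisible best-shot game the best response of every player lies in $\{0,1\}$. Fix the mixed strategies of the other players and let $M_i$ denote the random maximum over the in-neighbors of $i$. The expected utility of $i$ playing $s_i$ is
\[
f_i(s_i)=\E\bigl[\min(\max(M_i,s_i),1)\bigr]-p\,s_i.
\]
On $[1,\infty)$ the first term is constant while $-p\,s_i$ is strictly decreasing, so $s_i=1$ strictly dominates any larger value. On $[0,1]$ the right derivative equals $\Pr[\min(M_i,1)<s_i]-p$, which is monotonically non-decreasing, so $f_i$ is convex on $[0,1]$ and attains its maximum at an endpoint. Consequently every exact mixed Nash of the divisible game has support in $\{0,1\}^n$ and is precisely a mixed Nash of the indivisible game on the same graph.

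For the $\eps$-approximate case I would round an $\eps$-Nash $\bDelta$ to $\tilde{\bDelta}$ by setting $\tilde\Delta_i=\mathrm{Bernoulli}(q_i)$ with $q_i=\E_{s_i\sim\Delta_i}[\min(s_i,1)]$. The convexity of $f_i$ combined with the $\eps$-best-response condition forces $\Delta_i$ to concentrate its mass in a payoff sense near the endpoints $0$ and $1$, from which a direct calculation shows that each out-neighbor of $i$ sees its expected contribution from $i$ change by only $O(\eps)$ under the rounding. The rounded profile is then an $O(\eps)$-Nash of the indivisible game, and PPAD-hardness transfers from Theorem~\ref{thm:ppad-mix}.

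The main obstacle is precisely this rounding step: because $\min(\max(M,\cdot),1)$ is not linear in $s_i$, merely matching the moment $\E[\min(s_i,1)]$ does not automatically preserve the neighbors' expected payoffs. The key lemma I expect to need is that, in any $\eps$-best response for $i$, the mass of $\Delta_i$ lying strictly inside $(0,1)$ contributes at most $O(\eps)$ discrepancy to every neighbor's utility, which follows because $p$ is bounded away from $0$ and $1$ and hence $f_i$ rises at rate $\Omega(1)$ away from each endpoint maximum. With this lemma in hand, the reduction is immediate by a union bound over the $n$ players.
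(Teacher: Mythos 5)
Your reduction is essentially the paper's: both reduce from Theorem~\ref{thm:ppad-mix} by capping the valuation at $1$ and showing that best responses in the divisible best-shot game lie in $\{0,1\}$, so that the divisible game inherits the indivisible hardness. Your convexity argument (the right derivative $\Pr[\min(M_i,1)<s_i]-p$ is non-decreasing on $[0,1]$) is just the analytic form of the paper's exchange argument, which replaces a strategy $s\in(0,1)$ by a Bernoulli$(s)$ mixture over $\{0,1\}$ and checks that this weakly improves player $i$'s payoff in both cases of the neighbors' maximum; the two arguments are equivalent.

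Two caveats. First, convexity only shows that \emph{some} endpoint is a best response; it does not by itself rule out equilibria supported on $(0,1)$. If $f_i$ attains its maximum at an interior point, convexity forces $f_i$ to be constant on $[0,1]$ (so $\Pr[\min(M_i,1)<s]=p$ for all $s\in(0,1)$), and in that degenerate case player $i$ may legitimately randomize over interior points, which is what its out-neighbors actually see. The paper closes this hole by explicitly \emph{assuming} a tie-breaking rule (an indifferent player prefers the $\{0,1\}$ mixture); your claim that every exact mixed Nash has support in $\{0,1\}^n$ needs the same assumption, or an argument that this degeneracy cannot occur in the instances the reduction produces. Second, the key lemma you invoke for the approximate case is not justified as stated: the rate at which $f_i$ rises away from an endpoint is governed by $|\Pr[\min(M_i,1)<s]-p|$, not by $p$ being bounded away from $0$ and $1$, and this quantity can be arbitrarily small (precisely the degenerate case above), so interior mass need not incur an $\Omega(1)$ utility loss. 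Since the theorem as stated concerns exact mixed Nash equilibria, this extra rounding machinery --- which the paper does not attempt either --- is not needed to match the paper's result, but as written that lemma would not hold.
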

\begin{proof}
	We set the valuation function to be $U(s) = \max\{1, s\}$.
We assume in an equilibrium profile, the player prefers a mixed combination over action $0, 1$ to a stategy $s \in (0,1)$ if they have the same utility guarantee.
We then prove that, in a mixed Nash equilibrium, an agent will only play a mixed strategy over actions 0 and 1. 
To see this, first, in an equilibrium profile, no player chooses to play $s$ with $s > 1$ in the support of its mixed strategy, since it can decrease it to $1$, which reduces the cost and does not affect the utility.
Next, if a player chooses to play $s \in (0,1)$ in the support of its mixed strategy, then we claim it is always better to replace $s$ with a convex combination of $0$ and $1$, i.e. chooses $0$ with probability $1-s$ and chooses $1$ with probality $s$.
We divide into two cases. (1) If the max production of neighbors is $s' \geq s$. Then the utility for later profile gets larger while the cost remains the same. (2) If the max production of neighbors is $s' < s$, then both the cost and utility remains the same.

Assuming all agent play mixed strategies over actions 0 and 1 in the equilibrium profile, it is not hard to modify the proof of Theorem~\ref{thm:ppad-mix} to show that it is PPAD-hard to find a mixed Nash equilibrium. We conclude the proof here.
\end{proof}

	\section{The Bounded Treewidth Algorithm}
\label{sec:bounded-treewidth}

When the treewidth of the underlying graph is bounded by $O\left(\frac{\log n}{\log\log n}\right)$, we develop a PTAS for computing an $\eps$-Nash of the (indivisible) public goods game.
We first recall the definition of {\em tree decomposition}.

\begin{definition}
\label{def:treewidth}
(Tree decomposition)
A tree decomposition\footnote{In defining treewidth, we ignore directions of the edges.} of a graph $G(V, E)$ is a tree $T$, with nodes $X_1, \ldots X_{|T|}$. Each node $X_i$ is a subset of $V$, and it satisfies:
\begin{enumerate}
    \item The union of $X_i$ equals $V$.
    \item For each edge $(u, v)\in E$, there exists a node $X_i$ that contains both vertices $u$ and $v$.
    \item For any vertex $u \in V$, the set of tree nodes that contain the vertex $u$ forms a connected sub-tree of $T$.
\end{enumerate}
The {\em width} of a tree decomposition is defined as $\max_{1\leq i\leq |T|}|X_i| - 1$ and the {\em treewidth} of a graph $G$, denoted as $\twd(G)$, is the minimum width among all tree decompositions of the graph $G$.
\end{definition}

We will call the vertices of $T$  {\em nodes}, and those of $G$ {\em vertices}. 
The treewidth $\twd(G)$ will be abbreviated by $w$, while $d$ is the maximum degree of $G$.  Our main result is shown below. Comparing with the general result of~\cite{daskalakis2006computing}, we get rid of the exponential dependence on $d$. Alas, we make no assumptions on the sparsity of the graph.
\begin{theorem}
\label{thm:ptas}
Given an indivisible public goods game defined on a network $G(V, E)$, we can find an $\eps$-Nash equilibrium in time $\poly(n) \cdot\min\{2d/\eps, 16\log (n) / \eps \}^{O(w)}$.
In particular, when the treewidth is $O\left(\frac{\log n}{\log\log n}\right)$, we can find an $\eps$-Nash equilibrium in $\poly(n) \cdot\left(\frac{1}{\eps}\right)^{O(w)}$ time.
\end{theorem}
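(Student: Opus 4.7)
The plan is to run a bottom-up dynamic program on a tree decomposition of $G$, exploiting the symmetric structure of public goods games to avoid the exponential-in-$d$ payoff-matrix blow-up that one would get from applying the graphical-games algorithm of Daskalakis-Papadimitriou off the shelf. First I would compute (or approximate) a tree decomposition of width $O(w)$ in $2^{O(w)}\poly(n)$ time and convert it into a nice tree decomposition with the usual introduce, forget, and join nodes.

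Next, discretize mixed strategies to a grid of $K$ values in $[0,1]$, taking $K = 2d/\eps$ for the first bound and $K = 16\log n/\eps$ for the second. The central observation, which drives the whole algorithm, is that since the social composition function $X$ is symmetric in its arguments and players' mixed strategies are independent, the expected utility of player $i$ depends only on the marginals of its in-neighbors; in the max case, only on the single scalar $P_i = \prod_{j \in N(i)}(1-s_j)$. Combined with the threshold-to-public-good correspondence of Lemma~\ref{lem:threshold-to-public}, this means that to verify an $O(\eps)$-Nash condition at $i$ it suffices to know the discretized marginals of $i$'s in-neighbors, or equivalently the partial product accumulated along the decomposition.

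The DP is carried out in post-order on the nice tree decomposition. For each bag $X_b$, I maintain a table indexed by grid assignments $\sigma : X_b \to \{0, 1/K, \ldots, 1\}$ together with, for each $v \in X_b$ whose equilibrium condition has not yet been checked, a discretized value of the partial aggregate $\prod_{j \in N(v) \cap V_b}(1-s_j)$ (in the max case; an analogous low-dimensional summary suffices in general) coming from the vertices $V_b$ already introduced in the subtree rooted at $X_b$. Each table entry is a boolean recording whether the assignment and partial aggregates can be extended consistently to the full subtree so that every already-forgotten vertex satisfies the $\eps$-Nash condition. Introduce, forget, and join nodes update the table in the standard way; crucially, when $v$ is forgotten the partial aggregate has accumulated contributions from all its neighbors, since every edge of $G$ lies in some bag and the bags containing $v$ form a connected subtree, so the equilibrium check is complete and correct. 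With each aggregate also discretized to $O(1/\eps)$ buckets, every bag's table has size $K^{O(w)} \cdot (1/\eps)^{O(w)}$, giving the first bound $(2d/\eps)^{O(w)}\poly(n)$.

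The main obstacle is obtaining the second, tighter bound $(16\log n/\eps)^{O(w)}$, which matters precisely when $d \gg \log n$: here the $\eps/(2d)$ rounding is wasteful, and we want a coarser grid of only $O(\log n/\eps)$ levels. The hope is that such a grid still preserves each vertex's aggregate $\prod_{j \in N(v)}(1-s_j)$ to additive $O(\eps)$. I would establish this by a randomized-rounding argument: round each $s_j$ independently to an adjacent grid point with probabilities matching the true value in expectation, place the grid on $1-s_j$ on a logarithmic scale near $1$ to handle $s_j$ close to $1$, and apply a Chernoff-type concentration inequality to $\sum_{j \in N(v)} \log(1-\hat s_j)$ to show that, with failure probability $n^{-\Omega(1)}$ per vertex, all $n$ aggregates are preserved to within $\eps$ simultaneously; a union bound plus derandomization by the method of conditional expectations produces a deterministic rounding. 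Running the DP above on the coarser grid yields the second bound, and taking the minimum of the two plus noting that for $w = O(\log n/\log\log n)$ both collapse to $\poly(n)\cdot(1/\eps)^{O(w)}$ completes the proof.
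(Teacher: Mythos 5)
Your proposal follows essentially the same route as the paper: reduce to the threshold/aggregate form via Lemma~\ref{lem:threshold-to-public}, prove existence of a grid equilibrium with spacing $\eps/O(\log n)$ by randomized rounding plus a Chernoff bound (this is the paper's Lemma~\ref{lem:discretize-existence}), and run a dynamic program over a nice tree decomposition whose tables store the bag vertices' strategies together with capped partial aggregates over already-forgotten neighbors (the paper's Lemma~\ref{lem:dp-nash}). The only cosmetic differences are that your derandomization step is unnecessary --- existence of a grid equilibrium suffices, since the DP exhaustively searches the grid --- and that the partial aggregates are best kept exactly on the same $\delta$-grid, capped at a constant, rather than re-rounded to $O(1/\eps)$ buckets, so that no rounding error accumulates at join nodes.
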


First, a few notes about the proof.  
The time complexity of our algorithm depends minimally on $d$, while $d$ is in the exponent of the algorithm in~\cite{daskalakis2006computing}.  
To achieve this, we need to circumvent several difficulties, explained below.   Like the proof in \cite{daskalakis2006computing}, we first need to show the existence of an approximate Nash equilibrium with probabilities that are multiples of a small real $\delta>0$.  
Simply applying the total variation bound gives $\delta = O(\frac{\eps}{d})$, which is not coarse enough.  
In Lemma~\ref{lem:discretize-existence}, we use a probabilistic argument showing the existence of an approximate Nash equilibrium after discretizing the strategy space.
In particular, we randomly round a Nash equilibrium, for $\delta = O(\frac{\eps}{\log n})$ and utilize the concentration property.
Another difficulty is that the algorithm~\cite{daskalakis2006computing} works on the {\em primal} graph (see~\cite{daskalakis2006computing} for the definition), whose treewidth can be $w\cdot d$, yielding an exponential dependence on $d$. 
Instead, our algorithm directly works on the the original graph through {\em dynamic programming}, with no exponential dependence on $d$. 
Our dynamic programming method bares some similarities with the approach in~\cite{thomas2015pure}.  However, we must modify significantly that algorithm, whose running time has a polynomial dependency on the size of the payoff matrix, which in our case be exponential. 

Now, to prove the theorem, by Lemma~\ref{lem:threshold-to-public}, it suffices to show how to compute an $\eps$-approximate equilibrium of a threshold game $\g(V, E, t)$. 
Again, we assume $t = 1/2$ for simplicity. 
We discretize the strategy space of each player to $S^{\delta} = [\delta]$, where $\delta = \max\{\eps /2d, \eps / 16\log n\}$. 
We first show that there exists an $\eps$-approximate {\em pure} Nash equilibrium in strategy space.
\begin{lemma}
\label{lem:discretize-existence}
For any threshold game $\g(V, E, \frac{1}{2})$, there exists an $\eps$-approximate equilibrium when we restrict the strategy space to be $[\delta]^{n}$, where $\delta =  \eps / 16\log n$.
\end{lemma}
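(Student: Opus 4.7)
The plan is to produce the claimed discretized $\epsilon$-approximate equilibrium by randomized rounding of an exact (continuous) mixed equilibrium of the threshold game. First, I would invoke Kakutani's fixed-point theorem on the best-response correspondence (which is nonempty, convex-valued, and upper hemi-continuous) to obtain an exact equilibrium $\mathbf{x}^{*}\in[0,1]^{n}$ of $\g(V,E,\tfrac{1}{2})$. Then, independently for each player $j$, I would round $x_{j}^{*}$ to one of its two nearest $\delta$-grid points, setting $\tilde{x}_{j}=\delta\lceil x_{j}^{*}/\delta\rceil$ with probability $p_{j}:=x_{j}^{*}/\delta-\lfloor x_{j}^{*}/\delta\rfloor$ and $\tilde{x}_{j}=\delta\lfloor x_{j}^{*}/\delta\rfloor$ otherwise, so that $\E[\tilde{x}_{j}]=x_{j}^{*}$, $|\tilde{x}_{j}-x_{j}^{*}|\le\delta$, and $\tilde{x}_{j}\in[\delta]$.

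The key reduction is that it suffices to show positive probability for the event $\mathcal{E}:=\bigcap_{i\in V}\{|\sum_{j\in N_{i}}(\tilde{x}_{j}-x_{j}^{*})|\le\epsilon\}$. Indeed, under $\mathcal{E}$, the hypothesis $\sum_{j\in N_{i}}\tilde{x}_{j}>\tfrac{1}{2}+\epsilon$ forces $\sum_{j\in N_{i}}x_{j}^{*}>\tfrac{1}{2}$, which by the exact equilibrium condition gives $x_{i}^{*}=0$ and hence $\tilde{x}_{i}=0\in[-\epsilon,\epsilon]$, verifying clause (A) of Definition~\ref{def:approx-threshold-game}; clause (B) is symmetric. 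So any realization in $\mathcal{E}$ provides the desired discretized profile, and it remains to establish $\Pr[\mathcal{E}]>0$ by concentration and a union bound.

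For each $i$, since the summands $\tilde{x}_{j}-x_{j}^{*}$ are independent, zero-mean, and bounded in absolute value by $\delta$, Hoeffding's inequality yields
\[
\Pr\!\left[\left|\textstyle\sum_{j\in N_{i}}(\tilde{x}_{j}-x_{j}^{*})\right|>\epsilon\right]\le 2\exp\!\left(-\frac{2\epsilon^{2}}{|N_{i}|\,\delta^{2}}\right)=2\exp\!\left(-\frac{512\log^{2}n}{|N_{i}|}\right),
\]
which is comfortably $o(1/n)$ whenever $|N_{i}|=O(\log n)$, so a union bound over vertices closes the argument in the bounded-neighborhood regime.

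The main obstacle is controlling the tail for vertices $i$ whose in-neighborhood is substantially larger than $\log n$, where the basic Hoeffding bound becomes vacuous. Here I would exploit the equilibrium structure: a potential violation of clause (A) or (B) at $i$ can occur only when the exact $\sum_{j\in N_{i}}x_{j}^{*}$ lies in the narrow critical band around $\tfrac{1}{2}$, which in turn bounds the Bernoulli variance $\sum_{j\in N_{i}}\delta^{2}p_{j}(1-p_{j})\le\delta\sum_{j\in N_{i}}x_{j}^{*}=O(\delta)$. A Bernstein-type tail then produces probability $\exp(-\Omega(\epsilon\log n))$ independent of $|N_{i}|$, so the union bound still succeeds. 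Gluing the low-degree Hoeffding regime to the high-degree Bernstein regime and verifying the constants for the choice $\delta=\epsilon/(16\log n)$ is the delicate bookkeeping at the heart of the argument.
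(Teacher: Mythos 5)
Your overall strategy coincides with the paper's: take an exact equilibrium, randomly round each coordinate to the two adjacent grid points so that the rounding is unbiased and bounded by $\delta$, observe that it suffices to control the fluctuation of each neighborhood sum, and conclude by concentration plus a union bound with $\delta=\eps/(16\log n)$. You also correctly identify the crux, namely that a degree-dependent Hoeffding bound is vacuous for large in-neighborhoods and that one must instead use a mean/variance-dependent (multiplicative Chernoff or Bernstein) tail.

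The gap is in how you dispose of the high-degree vertices. You assert that ``a potential violation of clause (A) or (B) at $i$ can occur only when the exact $\sum_{j\in N_i}x_j^*$ lies in the narrow critical band around $\tfrac12$,'' and you use this to bound the variance by $\delta\sum_{j\in N_i}x_j^*=O(\delta)$. That assertion is not true as a deterministic statement: if $\sum_{j\in N_i}x_j^*$ is far above $\tfrac12$ (say equal to $10$, at a vertex of large in-degree), then indeed $x_i^*=0$ and clause (A) is automatic, but clause (B) is still violated whenever the rounded sum drops below $\tfrac12-\eps$, and ruling that out \emph{is} a probabilistic claim about a large-mean sum --- precisely the regime where your $O(\delta)$ variance bound fails (there $\sum_j\delta^2p_j(1-p_j)\le\delta\sum_j x_j^*$ can be as large as $\Theta(n\delta^2)$, not $O(\delta)$). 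Using this claim to justify the variance bound is circular. The repair is to weaken the requirement at such vertices: you do not need the two-sided event $|\sum_{j\in N_i}(\tilde{x}_j-x_j^*)|\le\eps$, only the one-sided statement that the rounded sum stays above $\tfrac12+\eps$. This is exactly the paper's third case: when $\sum_{j\in N_i}(x_j-t_j\delta)>2$ it bounds $\Pr[\sum_{j\in N_i}(\tilde{x}_j-t_j\delta)<1]\le\exp(-1/(4\delta))\le n^{-2}$ by a lower-tail multiplicative Chernoff bound, which is independent of the degree and of how large the mean is. With that case split (small mean: one-sided upper tail suffices; moderate mean: two-sided bound; large mean: one-sided lower tail with slack $\Omega(1)$ rather than $\eps$), the union bound goes through; without it, your event $\mathcal{E}$ is too strong and the argument does not close.
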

\begin{proof}
Suppose $\bx = (x_1, \ldots, x_n)$ is an equilibrium profile of the threshold game $\g(V, E, \frac{1}{2})$. For any $i \in [n]$, suppose $x_i \in [t_i\delta, (t_{i} + 1)\delta]$, then we randomly round $x_i$ to $\tx_i \in \{t_i\delta, (t_{i} + 1)\delta\}$, and we have
\begin{align*}
    \tx_i = \left\{\begin{matrix}
    \left(t_i + 1\right)\delta & \text{with prob. }\frac{x_i}{\delta} - t_i\\
    t_{i}\delta & \text{with prob. }1 - \frac{x_i}{\delta} + t_i.
    \end{matrix}\right.
\end{align*}
We remark that $\E[\tx_i] = x_i$ and $(\tx_i - t_i \delta)$ is a binary random variable that takes value in $\{0, \delta\}$, with mean $(x_i - t_i\delta)$. The rest of the proof establishes that $(\tx_1, \ldots, \tx_n)$ is an $\eps$-approximate equilibrium with positive probability, therefore proving its existence. 

By the multiplicative Chernoff bound, for any $i \in [n]$, if $\sum_{j\in N_i}(x_i - t_i\delta) = \sum_{j\in N_i}(\E[\tx_i] - t_i\delta) < \eps$, then we have
\begin{align}
    \label{eq:ptas1}
    \Pr\left(\sum_{j\in N_i}\tx_{j} - \sum_{j\in N_i}x_j \leq -\eps \right) = 0
\end{align}
and
\begin{align}
    \label{eq:ptas2}
    \Pr\left(\sum_{j\in N_i}\tx_{j} - \sum_{j\in N_i}x_j \geq \eps \right) &=
    \Pr\left(\sum_{j\in N_i}\left(\tx_{j} - t_j\delta\right) -\sum_{j\in N_i}\left(\E[\tx_j] - t_j\delta\right) \geq \eps\right)
    \leq \exp\left(-\frac{\eps}{3\delta}\right) \leq n^{-2}.
\end{align}
If $\sum_{j\in N_i}(x_i - t_i\delta) = \sum_{j\in N_i}(\E[\tx_i] - t_i\delta) \in [\eps, 2]$, then we have
\begin{align}
    \Pr\left(\left|\sum_{j\in N_i}\tx_{j} - \sum_{j\in N_i}x_j\right| \geq \eps \right) &= 
    \Pr\left(\left|\sum_{j\in N_i}\left(\tx_{j} - t_j\delta\right) -\sum_{j\in N_i}\left(\E[\tx_j] - t_j\delta\right)\right| \geq \eps\right) \notag\\
    &\leq 2\exp\left(-\frac{\eps^2}{2\delta \left(\sum_{j\in N_i}\E[\tx_j] - \sum_{j\in N_i}t_i\delta\right)}\right) \leq 2\exp\left(-\frac{\eps}{4\delta}\right) \leq n^{-2}.\label{eq:ptas3}
\end{align}
If $\sum_{j\in N_i}(x_i - t_i\delta) = \sum_{j\in N_i}(\E[\tx_i] - t_i\delta) > 2$, it then follows that $x_i = 0$ and we have
\begin{align}
\label{eq:ptas4}
     \Pr\left(\sum_{j\in N_i}\tx_{j} < 1 \right) \leq \Pr\left(\sum_{j\in N_i}\left(\tx_{j} - t_j\delta\right) < 1\right) \leq \exp\left(-\frac{1}{4\delta}\right) \leq n^{-2}.
\end{align}
Combining Eq.~\eqref{eq:ptas1}~\eqref{eq:ptas2}~\eqref{eq:ptas3}~\eqref{eq:ptas4} and using an union bound, 
we conclude that $(\tx_1,\ldots, \tx_n)$ satisfies equilibrium condition with probability at least $(1 - 1/n)$, completing the proof.
\end{proof}

We next provide an algorithm that finds an $\eps$-approximate equilibrium based on dynamic programming. 
A {\em nice tree decomposition} is a tree decomposition $T$ that only contains the following four types of nodes (see Figure~\ref{fig:nice-tree} for an illustration).
\begin{enumerate}
    \item Leaf node.
    \item Forget node. Such a node $i$ has only one child $i'$, and $X_{i'} = X_{i} \backslash \{v\}$ for some vertex $v \in V$.
    \item Introduce node. Such a node $i$ has only one child node $i'$, and $X_{i'} = X_{i} \cup \{v\}$ for some vertex $v \in V$.
    \item Join node. Such a node $i$ has exact two children nodes $i_1, i_2$, and $X_i = X_{i_1} = X_{i_2}$.
\end{enumerate}

Any tree decomposition can be converted into a nice tree decomposition, of size at most $w\cdot |V|$, in linear time without enlarging the width~\cite{bodlaender2008combinatorial}.
\begin{figure}[!h]
	\centering
	\begin{minipage}[t]{.3\textwidth}
		\centering
		\begin{tikzpicture}
		\node (0) at (0, 1.5) [circle, draw, very thick] {$u, v, w$};
		\node (1) at (0, 0) [circle, draw, very thick] {$u, v$};

		\draw[very thick] (0) -- (1);

		\end{tikzpicture}
		\caption*{Forget}
		\label{pic:forget}
	\end{minipage}%
	\begin{minipage}[t]{.3\textwidth}
		\centering
		\begin{tikzpicture}
    		\node (0) at (0, 1.5) [circle, draw, very thick] {$u, v$};
    		\node (1) at (0, 0) [circle, draw, very thick] {$u, v, w$};
    
    		\draw[very thick] (0) -- (1);
		\end{tikzpicture}
		\caption*{Introduce}
		\label{pic:introduce}
	\end{minipage}
	\begin{minipage}[t]{.3\textwidth}
		\centering
		\begin{tikzpicture}
		    \node (0) at (0, 1.5) [circle, draw, very thick] {$u, v$};
    		\node (1) at (-1, 0) [circle, draw, very thick] {$u, v$};
    		\node (2) at (1, 0) [circle, draw, very thick] {$u, v$};
    
    		\draw[very thick] (0) -- (1);
    		\draw[very thick] (0) -- (2);

		\end{tikzpicture}
		\caption*{Join}
		\label{pic:join}
	\end{minipage}
	\caption{An illustration for three types of nodes of a nice tree decomposition.}
	\label{fig:nice-tree}
\end{figure}
	
Now, we have
\begin{lemma}
\label{lem:dp-nash}
Given a threshold game $\g(V, E, \frac{1}{2})$ with the strategy space $[\delta]^{n}$, and a nice tree decomposition $T$ of the graph $G(V, E)$, we can compute an $\eps$-approximate equilibrium in $\delta^{-O(w)}$ time.
\end{lemma}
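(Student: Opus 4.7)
The plan is to run dynamic programming bottom-up on the nice tree decomposition $T$. For a node $i$ of $T$, let $V_i$ denote the set of vertices appearing in any bag in the subtree rooted at $i$. I will maintain a table $A_i$ indexed by pairs $(\sigma,\tau)$, where $\sigma\colon X_i \to [\delta]$ is a candidate discretized strategy for the bag and $\tau\colon X_i \to \{0,\delta,2\delta,\ldots\}$ records, for each $v \in X_i$, the partial in-neighbor sum $\sum_{(j,v)\in E,\,j\in V_i}\sigma^\star(j)$ accumulated under some extension $\sigma^\star$ of $\sigma$ to $V_i$. The entry $A_i[\sigma,\tau]$ is set to true iff such an extension $\sigma^\star$ exists with: (i) $\sigma^\star|_{X_i}=\sigma$; (ii) the in-neighbor partial sums on $X_i$ match $\tau$; and (iii) every vertex in $V_i\setminus X_i$ (already forgotten somewhere below $i$) already satisfies the $\eps$-approximate threshold condition of Definition~\ref{def:approx-threshold-game}.

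The transitions for the four node types are direct. For a \emph{leaf} node ($X_i=\emptyset$), the single entry $(\emptyset,\emptyset)$ is true. At an \emph{introduce} node with $X_i = X_{i'}\cup\{v\}$, for each true child entry $(\sigma',\tau')$ and each $a\in[\delta]$, I set $\sigma(v)=a$, $\sigma(u)=\sigma'(u)$ for $u\in X_{i'}$, update $\tau(u)=\tau'(u)+a\cdot\mathbbm{1}[(v,u)\in E]$ for $u\in X_{i'}$, and set $\tau(v)=\sum_{(u,v)\in E,\,u\in X_{i'}}\sigma'(u)$; the last equation is correct because any in-neighbor of $v$ must share a bag with $v$, so newly-introduced $v$'s in-neighbors in $V_{i'}$ all lie in $X_{i'}$. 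At a \emph{forget} node with $X_{i'}=X_i\cup\{v\}$, I mark $(\sigma,\tau)$ true iff some true child entry $(\sigma',\tau')$ extends $(\sigma,\tau)$ and the pair $(\sigma'(v),\tau'(v))$ passes the $\eps$-threshold check for $v$. At a \emph{join} node with $X_{i_1}=X_{i_2}=X_i$, I combine pairs of true entries $(\sigma_1,\tau_1),(\sigma_2,\tau_2)$ with $\sigma_1=\sigma_2=:\sigma$ via
\begin{equation*}
\tau(u)\;=\;\tau_1(u)+\tau_2(u)\;-\sum_{(w,u)\in E,\,w\in X_i}\sigma(w),
\end{equation*}
where the subtracted term corrects the double counting of in-neighbors lying in $X_i=V_{i_1}\cap V_{i_2}$.

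To bound the table size, observe that all values $\tau(v)>\tfrac12+\eps$ are indistinguishable for the $\eps$-threshold test at the forget step, so I cap each coordinate of $\tau$ at a constant (say $1$), leaving $O(1/\delta)$ distinct values per coordinate. Since $|X_i|\leq w+1$, every table has at most $\delta^{-O(w)}$ entries, and each transition can be evaluated in $\delta^{-O(w)}$ time. The nice tree decomposition has $O(wn)$ nodes, so the total running time is $\poly(n)\cdot \delta^{-O(w)}$. An $\eps$-approximate equilibrium is recovered by standard backpointers whenever the root table contains a true entry, and by Lemma~\ref{lem:discretize-existence} at least one such entry is guaranteed to exist.

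The main obstacle is keeping the partial-sum bookkeeping consistent. At join nodes the two children independently accumulate contributions from the same bag vertices in $X_i$, which forces the inclusion–exclusion correction shown above; and at introduce nodes one must use the tree-decomposition property (each edge lies in some bag containing both its endpoints) to argue that all in-neighbor contributions for a newly introduced $v$ are captured exactly by the values $\sigma'(u)$ for $u\in X_{i'}$. With these two invariants in place, correctness follows by a routine induction on the depth of $i$ in $T$.
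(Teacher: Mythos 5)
Your proposal is correct and follows essentially the same dynamic program as the paper's proof: a bottom-up table over the nice tree decomposition indexed by the bag's discretized strategies together with capped partial in-neighbor sums, of size $\delta^{-O(w)}$ per node, with the $\eps$-threshold condition verified at the moment a vertex leaves the bag. The only differences are bookkeeping conventions --- your $\tau$ includes in-bag contributions, which forces the inclusion--exclusion correction at join nodes, whereas the paper's $c$ counts only neighbors outside the bag so the join is a plain capped sum --- plus one small point to make explicit: the threshold condition must also be verified for the vertices still present in the root bag (as the paper does), unless you additionally require the root bag to be empty.
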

\begin{proof}
Given a nice tree decomposition $T$, we compute an $\eps$-approximate equilibrium via a bottom-up approach. 
For any node $X_{i} \in T$, we use $V_{i}$ to denote all vertices contained in $X_i$ and its sub-tree.  
We compute a table $T_i: [\delta]^{|X_i|}\times [\delta]^{|X_i|} \rightarrow \{0, 1\}$ for each node $X_i$, and we note that the size of the table is bounded by $\delta^{-O(w)}$.
Ideally, we would set an entry $T_{i}(s_1, \ldots, s_{|X_i|}, c_1, \ldots, c_{|X_i|}) = 1$, iff there exists a strategy profile $(p_1, \ldots, p_{|V_i|})$ of vertex set $V_i$, such that 
\begin{enumerate}[(i)]
\item for any vertex $v \in V_i\backslash X_i$, the vertex $v$ satisfies the equilibrium condition, 
\item for any vertex $v\in X_i$, $p_v = s_v$ and $\sum_{j \in N_v\cap (V_i \backslash X_i)}p_{v} = c_{v}$, i.e., the summation of vertex $v$'s neighbor in $V_i \backslash X_i$ is $c_v$. 
\end{enumerate}
We remark that vertices in $X_i$ do not need to satisfy the equilibrium condition, and we only record the summation of their neighbors in $V_i \backslash X_i$. 
Next, we show how to do update the table in a bottom-up manner.

(1) Leaf. 
For any leaf $X_i \in T$ and $s, c \in [\delta]^{|X_i|}$, we set $T_i(s, c) = 1$ if and only if $c = (0,\ldots, 0)$.

(2) Forget. 
Suppose $X_{i'} =  X_{i}\cup \{v\}$ is the parent node, $s, c \in [\delta]^{|X_i|}$ and $s_v, c_v \in [\delta]$, we set $T_{i'}(s, s_v, c, c_v) = 1$ if $T_{i}(s, c) = 1$ {\em and} $c_v = 0$; we set $T_{i'}(s, s_v, c, c_v) = 0$ otherwise.

(3) Introduce. 
Suppose $X_{i'} = X_{i} \backslash \{v\}$ is the parent node and $s, c \in [\delta]^{|X_{i'}|}$, we set $T_{i'}(s, c) = 1$ iff there exists $(s, s_v, c, c_v) \in [\delta]^{2|X_i|}$, such that $T_{i}(s, s_v, c, c_v) = 1$ {\em and} the vertex $v$ satisfies the equilibrium condition, i.e., (i) if $c_v + \sum_{j \in N_{v}\cap X_i} s_{j} > \frac{1}{2}$, then $s_v = 0 \pm \eps$; (ii) $c_v + \sum_{j \in N_{v}\cap X_i}s_j < \frac{1}{2}$, then $s_v = 1 \pm \eps$.   

(4) Join. 
Suppose node $X_i$ has two children, $X_{i_1}$, $X_{i_2}$, and $X_{i} = X_{i_1} = X_{i_2}$. 
Then for any $s, c \in [\delta]^{|X_i|}$, we set $T_{i}(s, c) = 1$ iff there exists $c_1, c_2 \in [\delta]^{|X_i|}$, such that $T_{i_1}(s, c_1) = 1$, $T_{i_2}(s, c_2) = 1$, {\em and} for any $j\in X_i$, $c[j] = \min\{c_{1}[j] + c_{2}[j], 1\}$. 

After we reach the root $r$ and complete the table $T_r$, we verify equilibrium conditions for all vertices $v \in X_{r}$. 
To be more specific, if there exists a configuration $(s, c)\in [\delta]^{|X_i|} \times [\delta]^{|X_i|}$, 
such that $T_r(s, c) = 1$ {\em and} all vertices $v$ in $V_r$ satisfy the equilibrium, i.e., 
(i) if  $c_v + \sum_{j\in N_{v}\cap V_r}s_j> \frac{1}{2} + \eps $ then $s_v = 0\pm \eps$; 
(ii) if $c_v + \sum_{j\in N_{v}\cap V_r}s_j < \frac{1}{2} - \eps$ then $s_v = 1\pm \eps$;
we then confirm that there exists an $\eps$-approximate equilibrium. 
We can find one by either fixing the strategy of all vertices $v\in V_r$ to be $s_v$, 
and recursively computing equilibrium profiles in the sub-tree; 
or we can associate a satisfiable assignment (if there exists one) for each entry during the dynamic programming process.
We output that there is no $\eps$-approximate equilibrium profile otherwise.
\end{proof}
Combining Lemma~\ref{lem:dp-nash} and Lemma~\ref{lem:discretize-existence}, we conclude the proof of Theorem~\ref{thm:ptas}.

We can show a similar result for divisible public good games with the summation rule. Again, when the treewidth of the underlying graph is bounded by $O(\log n/ \log\log n)$, there is a PTAS for finding an $\eps$-approximate pure Nash equilibrium of the public goods game: 
\begin{theorem}
	\label{thm:ptas-sum2}
	Given a divisible public goods game with summation utility defined on a directed network $G(V, E)$, we can find an $\eps$-approximate pure Nash equilibrium in time $\poly(n)\cdot\min\{2d/\eps, 16\log (n / \eps) \}^{O(w)}$ time.
	In particular, when the treewidth is $O(\log n /\log\log n)$, we can find an $\eps$-approximate pure Nash equilibrium in $\poly(n) \cdot (\frac{1}{\eps})^{O(w)}$ time.
\end{theorem}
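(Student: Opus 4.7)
The plan is to follow the two-step template of Theorem~\ref{thm:ptas}: first show that discretizing each player's strategy space to a grid $[\delta] := \{0,\delta,2\delta,\ldots,1\}$ of an appropriate granularity $\delta$ still admits an $\eps$-approximate pure Nash equilibrium; then find one by bottom-up dynamic programming on a nice tree decomposition. Since Bramoull\'e et al.~\cite{bramoulle2007public} guarantee an exact pure Nash equilibrium $\bs \in [0,1]^n$ of the summation game, and each best response $b_i(s_{-i}) = \max\{1 - \sum_{j \in N(i)\setminus\{i\}} s_j,\, 0\}$ is $1$-Lipschitz in the neighborhood sum, controlling the discretization error of this sum is all that is needed.

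For the deterministic bound $\delta = \eps/(2d)$, I would round each $s_i$ to the nearest grid point; the triangle inequality gives $\bigl|\sum_{j \in N(i)} \tilde{s}_j - \sum_{j \in N(i)} s_j\bigr| \le d\delta/2 \le \eps/2$, so by $1$-Lipschitzness $|\tilde{s}_i - b_i(\tilde{s}_{-i})| \le \eps$ for every $i$. For the logarithmic bound $\delta \asymp \eps/\log(n/\eps)$, I would adapt Lemma~\ref{lem:discretize-existence}: round each $s_i$ independently to one of its two flanking grid points with expectation $s_i$, apply a Chernoff/Hoeffding bound to obtain $\Pr\bigl[\bigl|\sum_{j \in N(i)}(\tilde{s}_j - s_j)\bigr| > \eps\bigr] \le n^{-2}$ for every $i$, and union bound over the $n$ players to extract a rounding whose best-response constraints are all satisfied within $\eps$.

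For the algorithm, I would adapt Lemma~\ref{lem:dp-nash} almost verbatim. On a nice tree decomposition of width $w$, maintain at each node $X_i$ a table $T_i : [\delta]^{|X_i|} \times [\delta]^{|X_i|} \to \{0,1\}$, where $\mathbf{s}$ records the prescribed strategies of vertices in $X_i$ and $\mathbf{c}$ records the already-committed contribution from descendants in $V_i \setminus X_i$ to each vertex of $X_i$, truncated at $1$ (values above $1$ collapse to the same best response, so truncation loses no information). Set $T_i(\mathbf{s}, \mathbf{c}) = 1$ iff some consistent assignment on $V_i$ makes every vertex of $V_i \setminus X_i$ play its best response up to additive $\eps$. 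The Leaf, Forget, and Join updates carry over unchanged; the Introduce update, when a vertex $v$ is forgotten going upward, enforces $s_v = \max\{1 - (c_v + \sum_{j \in N(v) \cap X_i} s_j),\, 0\} \pm \eps$. Each table has size $\delta^{-O(w)}$ and each update runs in $\delta^{-O(w)}$ time, yielding total runtime $\poly(n) \cdot \delta^{-O(w)}$ and thus the claimed bound.

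The main obstacle, compared to the threshold game of Lemma~\ref{lem:dp-nash}, is that the best response here is continuous rather than a step function, so in principle a sloppy discretization could blur the equilibrium condition at many vertices simultaneously. This is resolved by the $1$-Lipschitz property of $b_i$, which converts a $\delta$-scale perturbation of any neighbor's strategy into an $O(\delta)$ perturbation of the best response, together with capping committed contributions at $1$ to keep the second coordinate of the table finite. With these two observations in place, both the existence argument and the dynamic program mirror the proof of Theorem~\ref{thm:ptas} with only cosmetic changes.
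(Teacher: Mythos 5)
Your proposal is correct and matches the paper's intended argument: the paper omits this proof, stating only that it is ``similar to Theorem~\ref{thm:ptas},'' and your adaptation --- a rounding-existence step in the spirit of Lemma~\ref{lem:discretize-existence} followed by the table-based dynamic program of Lemma~\ref{lem:dp-nash}, with the $1$-Lipschitz best response and truncation of committed contributions at $1$ handling the fact that the best response is now continuous --- is exactly the adaptation the authors have in mind. No gaps worth flagging.
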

The proof is similar to Theorem~\ref{thm:ptas}, and is omitted.

	\section{Discussion}
\label{sec:discussion}
We explore the complexity of equilibria in public goods games played on directed graphs. One striking conclusion is the ubiquity of PPAD-completeness in this domain.  For a number of quite different reasons, very different variants of the problem are shown to share the same fate --- and a rather sophisticated fate at that.  
This is in stark contrast with the corresponding public goods games  in undirected graphs, where the consensus is that equilibria are rather boring (but see the discussion below of some intriguing problems in undirected networks raised by this work). 
Note that graphical games are already intractable when they are symmetric --- but, of course, this is because the local normal form games in each neighborhood can simulate any asymmetry. 

Are public good games in directed networks real?  It can be argued\footnote{Many thanks to the reviewers for bringing this point up.} that some of the directed graphs we evoked in the introduction (towns that are downwind or upriver from one another, or the relationship ``B is on A's way to work'') are {\em transitive}, and it is easy to see that public good games on such directed graphs have trivial equilibrium problems. On the other hand, many social networks with sharing features are asymmetric and non-transitive, and so are infection networks in much of epidemic modeling. In addition, we hope that our techniques and techniques may be a public good of some value to the community.

For the indivisible case, we found that there are three special cases of utilities that admit polynomial time solution: Flat utilities, steep utilities, plus a third polynomial case, alternating utilities, which is quite unexpected and intriguing (its algorithm relies on the solution of a system of equations in $\bf F_2$).  We show that these are the only tractable cases.  But there is an interesting {\em variant} of this problem which is quite mysterious:  Suppose that we allow the utility function to be such that {\em certain steps of $X$ have height exactly $p$}, and therefore nodes can be indifferent between buying the good and free-riding.  We suspect that this variant is subject to the same dichotomy, but it seems much harder to prove.   Consider for example the function $X(1) = 1>p, X(k)= 1+p$ for all $k>1$.  Then it is easy to see that, in this case, odd cycles {\em do} have an equilibrium, with all players producing the good: the $p$ step makes them indifferent to doing so.  This deprives us of a valuable gadget.  It turns out that there is a 7-node, 21-edge gadget with no equilibrium for this case: the node set is $\{1,\ldots, 7\}$ and the edges go from $i$ to $i+1, i+2, 1+4\bmod 7$. But this does not immediately give us an NP-hardness proof, nor does it generalize to other composition functions with $p$ steps.  

We believe that the general form of the utility function of public goods games in networks (Section 3) has not been articulated in the past, and it does lead to an interesting complexity classification problem.  We note that the equilibrium situation in this generality is open in the {\em undirected} case.  Take for example the case where $p<1$ and $X$ is the threshold function $X(i) = 1$ if $i\geq 2$ and zero otherwise; we believe that this case is NP-complete.  The complexity dichotomy problem in undirected networks with general utility functions is a very interesting open problem raised by this work.


The divisible good games under the summation utility are something of a mystery when it comes to {\em mixed} equilibria.  As with other games with uncountable strategy spaces, it is not easy to characterize mixed Nash equilibria in a tangible, useful way.  We believe that positive results may be possible here:  Could it be that there are always mixed Nash equilibria with small support, and in fact they are easy to find?   There are reasons for hope for a truly positive result in this case.

The intractability of simple Nash equilibrium problems in common goods games in directed networks is an indication that asymmetry in social systems --- a notion intuitively coterminous with unfairness --- may consistently lead to instability.  Can the intractability proofs help identify the features of the directed networks, and of the agents and their utilities, which are at the root of such instability?  This could lead to principles for better design of social networks, or beneficial interventions therein.

	\begin{acks}
		The authors would like to thank Xi Chen and two anonymous EC reviewers for their very helpful feedback.This research was supported by NSF grants CCF-1763970 AF and CCF-1910700 AF, and a grant from Softbank.
	\end{acks}

	\bibliographystyle{ACM-Reference-Format}
	\bibliography{ref}
	
	\ifdefined\isArxiv
	\newpage
	\appendix
	\section{Omitted proof from Section~\ref{sec:ppad-hard-mix}}
\label{sec:omit-proof}
\subsection{Missing proof from Section~\ref{sec:equivalence}}
\label{sec:equivalence-app}
\lemReduction*
\begin{proof}
    We first reduce the threshold game to the public good game.
    Given an instance of the threshold game $\g(V, E, t)$, we construct a public good game as follow. 
    We keep the network $G(V, E)$ unchanged and set the value of the good to be $U = 1$ and the price to be $p = e^{-t} \in (0, 1)$. 
    For any $\eps$-Nash $\bs = (s_1, \cdots, s_n)$ of the public good game, 
    we construct an $\eps$-approximate equilibrium $\bx = (x_1, \cdots, x_n)$ of $\g(V, E, t)$ as
    \begin{align*}
        x_i = \min\{-\log(1 - s_j), 1\} \in [0, 1], \forall i.
    \end{align*}
    Consider any agent $i$ in the public good game, its utility is specified as
    \begin{align*}
        U(s_i, s_{-i}) = \left\{\begin{matrix}
        1 - p & s_i = 1\\
        1 - \prod_{j \in N_i}(1 - s_j) & s_i = 0,
        \end{matrix}\right.
    \end{align*}
    thus we have 
    \begin{align*}
    U(1, s_{-i}) - U(0, s_{-i}) = \prod_{j \in N_i}(1 - s_j) - p.
    \end{align*}
    We divide into three cases.
    
    Case 1. $\prod_{j \in N_i}(1 - s_j) - p > \eps$. This implies $s_i = 1$ and $x_{i} =  \min\{-\log(1 - s_i), 1\} = 1$. Now we have
    \begin{align*}
        &\prod_{j \in N_i}(1 - s_j) - p > \eps \Rightarrow \prod_{j \in N_i}(1 - s_j) > p + \epsilon 
        \Rightarrow \log \prod_{j \in N_i}(1 - s_j) > \log (p + \epsilon)\\
        &\Rightarrow  \sum_{j\in N_i} -\log (1-s_j) < -\log(p+\epsilon) < -\log p = t. 
    \end{align*}
    Since $\max_{j\in N_i}\{-\log(1-s_j)\} \leq  \sum_{j\in N_i} -\log (1-s_j) < t < 1$, we have $\sum_{j\in N_i} x_j = \sum_{j\in N_i} -\log (1-s_j) < t$, this satisfies the equilibrium condition of the threshold game.
    
    Case 2. $\prod_{j \in N_i}(1 - s_j) - p < -\eps$. This implies $s_{i} = 0$ and $x_i = 0$. Similar to the first case, we have
    \begin{align*}
        &\prod_{j \in N_i}(1 - s_j) - p < -\eps \Rightarrow \prod_{j \in N_i}(1 - s_j) < p - \epsilon 
        \Rightarrow \log \prod_{j \in N_i}(1 - s_j) < \log (p - \epsilon)\\
        &\Rightarrow  \sum_{j\in N_i} -\log (1-s_j) > -\log(p-\epsilon) > -\log p = t. 
    \end{align*}
    Since $t < 1$ and $-\log (1-s_j) > 0$ for $\forall j \in N_i $, we conclude that $\sum_{j\in N_i} x_j = \sum_{j\in N_i} \min\{-\log (1-s_j), 1\} > t$, this satisfies the equilibrium condition of the threshold game.
    
    Case 3 $\prod_{j \in N_i}(1 - s_j) - p \in [-\eps, \eps]$. This time $s_{i}$ can be any number in $[0, 1]$, so does $x_{i}$. We need to verify that $\sum_{j \in N_i}x_i \in [t - 8\eps, t + 8\eps]$. We have
    \begin{align*}
    &\prod_{j \in N_i}(1 - s_j) - p \in [-\eps, \eps] \Rightarrow \prod_{j \in N_i}(1 - s_j) \in [p-\eps, p+\eps]\\ 
    &\Rightarrow \sum_{j\in N_i} -\log (1-s_j) \in [-\log(p + \eps), -\log(p - \eps)].
    \end{align*}
    When $\eps < \eps < \min\{0.1,\frac{t}{8},\frac{1-t}{8}\}$, we can prove that $[-\log(p + \eps), -\log(p - \eps)] \in [t - 8\eps, t + 8\eps]$. 
    We defer the calculation to lemma~\ref{lem:calculate1}. 
    Now we have $\sum_{j\in N_j}x_{j} = \sum_{j\in N_i} \min\{-\log (1-s_j), 1\} = \sum_{j\in N_i} -\log (1-s_j) \in [t - 8\eps, t + 8\eps]$, which satisfies the equilibrium condition.
    
    We next show there is a polynomial time reduction from public good games to threshold games. 
    Similar as above, given an instance of public good game defined on $G(V, E)$, $U = 1, 0 < p < 1$, we construct a threshold game on the same network $(V, E)$, with $t = \frac{1}{2}$. 
    Given an $\eps$-approximate equilibrium $\bx = (x_1, \ldots, x_n)$ of the threshold game, we recover an $-4p\log(p)\eps$-Nash $\bs = (s_1, \ldots, s_n)$ of the public good game as follow,
    \begin{align*}
        s_{i} = \left\{\begin{matrix}
        1 - p^{2x_i} & x_i \leq \frac{1}{2} + \eps\\
        1 & \text{otherwise}.
        \end{matrix}\right.
    \end{align*}
    For any agent $i$, if $\sum_{j \in N_i}x_j > \frac{1}{2}+ \eps$, then $x_i = 0$ and $s_i = 0$ by definition. 
    It then follows that $U(1, s_{-i}) - U(0, s_{-i}) = \prod_{j \in N_i}(1 - s_j) - p \leq p^{\sum_{j \in N_i}2x_j} - p \leq p^{1 + 2\eps} - p < 0$. 
    Hence, it satisfies the equilibrium condition. If $\sum_{j \in N_i}x_j < \frac{1}{2} - \eps$, then $x_i = 1$ and $s_i = 1$. Meanwhile, we have $U(1, s_{-i}) - U(0, s_{-i}) = p^{\sum_{j \in N_i}2x_j} - p > p^{1 - 2\eps} - p > 0$. 
    Finally, if $\sum_{j \in N_i}x_j \in [\frac{1}{2}-\eps, \frac{1}{2}+\eps]$, we have $U(1, s_{-i}) - U(0, s_{-i}) = \prod_{j \in N_i}(1 - s_j) - p = p^{\sum_{j \in N_i}2x_j} - p \in [p(p^{2\eps} - 1), p(p^{-2\eps} - 1)] \in [2p\log (p)\eps, -4p\log (p)\eps]$. 
    Here we use the facts that $\lambda \leq e^{\lambda} - 1\leq 2\lambda$ for $\lambda < 1$. 
    Therefore, we have verified that $\bs=(s_1, \ldots,s_n)$ is an $-4p\log(p)\eps$-Nash of the public good game. 
    Hence, setting $c_p = -4p\log p$, we conclude the proof.
\end{proof}
\begin{lemma}
\label{lem:calculate1}
For any $0 < t < 
1$ and $0 < \eps < \min\{0.1,\frac{t}{8},\frac{1-t}{8}\}$, we have
\begin{enumerate}
    \item $-\log(e^{-t} - \eps) < t + 8\eps$,
    \item $-\log(e^{-t} + \eps) > t - 8\eps$.
\end{enumerate} 
\end{lemma}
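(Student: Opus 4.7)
The plan is to prove both inequalities by exponentiating and reducing them to elementary estimates on $e^x$ near $0$. Note first that $\eps < e^{-t}$ holds automatically, since $t < 1$ implies $e^{-t} > 1/e > 0.36 > 0.1 > \eps$, so the logarithm in inequality (1) is well-defined.

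For the first inequality, I would observe that $-\log(e^{-t}-\eps) < t + 8\eps$ is equivalent (after multiplying by $-1$ and exponentiating both sides, which is monotone) to
\[
e^{-t} - \eps > e^{-(t+8\eps)} = e^{-t}\cdot e^{-8\eps},
\]
i.e.\ $e^{-t}(1 - e^{-8\eps}) > \eps$. Since $8\eps \leq 0.8 \leq 1$, the elementary bound $1 - e^{-x} \geq x - x^2/2 \geq x/2$ for $x \in [0,1]$ gives $1 - e^{-8\eps} \geq 4\eps$. Combining with $e^{-t} > 1/e > 1/4$, we get $e^{-t}(1 - e^{-8\eps}) > 4\eps \cdot (1/e) > \eps$, as desired.

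For the second inequality, $-\log(e^{-t} + \eps) > t - 8\eps$ is equivalent (again by exponentiating) to
\[
e^{-t} + \eps < e^{-t+8\eps} = e^{-t}\cdot e^{8\eps},
\]
i.e.\ $e^{-t}(e^{8\eps} - 1) > \eps$. Now the standard inequality $e^x - 1 \geq x$ for $x \geq 0$ yields $e^{8\eps}-1 \geq 8\eps$, so $e^{-t}(e^{8\eps}-1) \geq 8\eps \cdot e^{-t} > 8\eps/e > \eps$.

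There is no real obstacle here; the lemma is a routine calculus estimate. The only mild subtlety is choosing the right constants so that the ``$8$'' in $t \pm 8\eps$ absorbs both (i) the second-order loss in the $1 - e^{-x} \approx x$ expansion on the first side, and (ii) the factor $e^{-t} \in (1/e, 1)$ that appears on both sides; the conditions $\eps < 0.1$ (to keep $8\eps \leq 1$ so the Taylor bound applies) and $t < 1$ (to lower-bound $e^{-t}$ by $1/e > 1/4$) are exactly what is needed to close both bounds with room to spare. The constraints $\eps < t/8$ and $\eps < (1-t)/8$ ensure that the conclusions $t \pm 8\eps$ lie in $(0,1)$ and so the consequent equilibrium-condition thresholds in the calling lemma are meaningful, but they are not directly needed in the two displayed inequalities themselves.
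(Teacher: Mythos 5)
Your proof is correct and follows essentially the same route as the paper's: exponentiate both inequalities to reduce them to $e^{-t}(1-e^{-8\eps})>\eps$ and $e^{-t}(e^{8\eps}-1)>\eps$, then close each with elementary bounds on $e^x$ and the lower bound $e^{-t}>1/e$ coming from $t<1$. The only differences are the particular constants chosen (the paper uses $1-e^{-8\eps}>3\eps$ together with $e^{-t}>1/3$, you use $1-e^{-8\eps}\geq 4\eps$ together with $e^{-t}>1/4$), which is immaterial.
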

\begin{proof}
    We have
    \begin{align*}
        &-\log(e^{-t} - \eps) < t + 8\eps \Leftrightarrow \log(e^{-t} - \eps) > -(t + 8\eps) \Leftrightarrow e^{-t} - \eps > e^{-(t + 8\eps)}\\
        &\Leftrightarrow e^{-t}\left(1 - e^{-8\eps}\right) > \eps \Leftarrow 1 - e^{-8\eps} > 3\eps.
    \end{align*}
    By simple calculations, we can show $1 - e^{-8\eps} > 3\eps$ for $\eps < 0.1$. On the other side, we have
    \begin{align*}
    &-\log(e^{-t} + \eps) > t - 8\eps \Leftrightarrow \log(e^{-t} + \eps) < -(t - 8\eps) \Leftrightarrow e^{-t} + \eps < e^{-t}\cdot e^{8\eps} \Leftrightarrow e^{-t}(e^{8\eps} - 1)\geq \eps
    \end{align*}
    This follows from the fact that $e^{-t}(e^{8\eps} - 1) \geq \frac{1}{3}(e^{8\eps} - 1) \geq \frac{8}{3}\eps > \eps$.\qedhere
\end{proof}

\subsection{Missing proof from Section~\ref{sec:reduce-circuit}}
\label{sec:reduce-circuit-app}

We provide PPAD-hardness proof for threshold game.

\begin{theorem}[Restatement of Theorem~\ref{thm:threshold-hard}]
\label{thm:threshold-hard-app}
It is PPAD-hard to find an $\eps$-approximate equilibrium of the threshold game, for some constant $\eps > 0$.
\end{theorem}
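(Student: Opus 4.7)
The plan is to reduce from the $\eps$-$\gc$ problem, which is PPAD-hard for some constant $\eps > 0$ by Rubinstein~\cite{rubinstein2018inapproximability}. Given a generalized circuit $\mS = (V, \T)$, I would build a threshold game $\g(V', E', 1/2)$ on an expanded vertex set $V' \supseteq V$ so that, for sufficiently small $\eps' > 0$, the restriction of any $\eps'$-approximate equilibrium of $\g$ to $V$ is an $\eps$-approximately satisfying assignment of $\mS$. The construction proceeds gate-by-gate: each $T \in \T$ is replaced by a small subgraph (a ``gadget'') whose threshold-game equilibria enforce the corresponding constraint from Table~\ref{table:constraint} on the inputs and outputs of $T$. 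Because every node $v \in V$ is the output of exactly one gate (by the defining property of generalized circuits), we can consistently identify output players across gadgets.

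The keystone is the elementary gadget $G_{\frac{1}{2}-}(\nm{v_1,v_2} v)$ already sketched in Section~\ref{sec:reduce-circuit}: the directed cycle $v_a \to v_b \to v \to v_a$, with the inputs $v_1, v_2$ pointing into $v_a$. A direct case analysis of the threshold responses (with $t = 1/2$) shows that at any equilibrium $\bx[v] = \max\{1/2 - \bx[v_1] - \bx[v_2], 0\}$: when $\bx[v_1] + \bx[v_2] > 1/2$, the only stable configuration has $v_a$ switched off and hence $\bx[v] = 0$; when $\bx[v_1] + \bx[v_2] < 1/2$, the cycle can only close if $v_a$ is indifferent, which pins $\bx[v_1] + \bx[v_2] + \bx[v] = 1/2$ and so $\bx[v] = 1/2 - \bx[v_1] - \bx[v_2]$ (while $\bx[v_a] = \bx[v_b] = 1/2$ are forced by the indifference conditions downstream). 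Repeating the analysis with $\eps'$-slack in each threshold inequality shows that the same identities hold up to an additive error proportional to $\eps'$.

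Using $G_{\frac{1}{2}-}$ as a building block I would then assemble the arithmetic gates: $G_{\frac{1}{2}}$ is $G_{\frac{1}{2}-}$ with both inputs set to $nil$; $G_{=}$ chains two copies so that $\max\{1/2 - \max\{1/2 - x, 0\}, 0\} = x$ for $x \in [0, 1/2]$; $G_{+}$ and $G_{-}$ follow from further compositions using similar double-negation identities; $G_{\times \frac{1}{2}}$ feeds a constant $1/2$ generated by $G_{\frac{1}{2}}$ as one of the two inputs; and $G_{<}$ exploits the sharp switching behavior of $G_{\frac{1}{2}-}$ around its threshold. All equilibrium values in these arithmetic gates remain in $[0, 1/2 + O(\eps')]$, so the clipping $\max\{\cdot, 0\}$ is consistent with the circuit semantics in Table~\ref{table:constraint}.

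The main obstacle is handling the logic gates $G_{\wedge}, G_{\vee}, G_{\neg}$ while preserving a \emph{constant} error tolerance $\eps$: a naive cascade through arithmetic gates would let perturbations accumulate linearly with circuit depth, yielding only $1/\poly(n)$ hardness. To prevent this amplification I would force the logical operands onto the two stable values $\{0, 1\}$ rather than $\{0, 1/2\}$, and introduce transformer gadgets $G_{\frac{1}{2},1}$ and $G_{1,\frac{1}{2}}$ that move between the two scales. The transformers are built by composing $G_{\frac{1}{2}-}$ with an extra saturating stage: once the input is above $1/2$ the threshold response pins the output, absorbing any $\eps'$-size input perturbation rather than passing it through. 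The logic gates themselves are then simple thresholded combinations of $\{0,1\}$-valued players (a negating cycle for $G_{\neg}$ and majority-style thresholds for $G_{\wedge}$, $G_{\vee}$), and they are error-resilient because inputs that are $\eps'$-close to $0$ or $1$ still produce the correct discrete output. Choosing $\eps'$ small enough relative to Rubinstein's hardness constant, and relative to the worst-case error blow-up inside any single gadget, completes the reduction and yields Theorem~\ref{thm:threshold-hard-app}.
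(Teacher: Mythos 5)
Your overall strategy coincides with the paper's: reduce from $\eps$-$\gc$, realize each gate by a local gadget in a threshold game with $t=\frac{1}{2}$, base everything on the elementary cycle gadget $G_{\frac{1}{2}-}(\nm{v_1,v_2} v)$ computing $\bx[v]=\max\{\frac{1}{2}-\bx[v_1]-\bx[v_2],0\}\pm O(\eps)$, and obtain constant-error resilience for the logic gates by moving operands from the $\{0,\frac{1}{2}\}$ scale to the $\{0,1\}$ scale and back via transformer gadgets. The analysis of the elementary gadget, the chaining that gives $G_{=}$, $G_{+}$, $G_{-}$, the comparison gate, and the transformer idea all match the appendix proof.

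There is, however, one concrete gap: your construction of $G_{\times\frac{1}{2}}$ does not work. Feeding a constant-$\frac{1}{2}$ player as the second input of $G_{\frac{1}{2}-}(\nm{v_1,v_2} v)$ yields $\bx[v]=\max\{\frac{1}{2}-\bx[v_1]-\frac{1}{2},0\}=0$, not $\frac{1}{2}\bx[v_1]$; more generally, no feed-forward composition of the clipped affine maps $x\mapsto\max\{\frac{1}{2}-x,0\}$ and $(x,y)\mapsto\max\{\frac{1}{2}-x-y,0\}$ can produce a slope of $\frac{1}{2}$, since every stage has slope $\pm 1$ wherever it is not clipped. The paper's HALF gadget (Figure~\ref{fig:half-gadget}, Lemma~\ref{lem:half-gadget}) needs a genuinely different device: a feedback loop in which the output player $v$ is wired as an input to two of the four elementary gadgets, so that the final gadget enforces $\left(\frac{1}{2}-\bx[v_1]+\bx[v]\right)+\bx[v]=\frac{1}{2}$, i.e.\ $\bx[v]=\frac{1}{2}\bx[v_1]$, with a two-case analysis to handle the clipping. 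Without this (or an argument that the hardness of $\eps$-$\gc$ survives with the $G_{\times\xi}$ gate removed), your reduction cannot realize all nine gate types, so you should add such a feedback construction to complete the proof.
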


\begin{proof}
We fix the threshold $t = \frac{1}{2}$ in the rest of the proof. Furthermore, we restrict the equilibrium strategy in $[0, \frac{1}{2} + \eps] \cup \{1\}$, since for any $\eps$-approximate equilibrium $\bx = (x_1, \ldots, x_n)$, we could set 
\begin{align*}
\tilde{x}_{i} = \left\{\begin{matrix}
x_{i} &  x_{i} \leq \frac{1}{2} + \eps\\
1 & \text{otherwise}
\end{matrix}
\right.
\end{align*}
and we can easily verify that $\tilde{\bx} = [\tilde{x}_1, \ldots, \tilde{x}_n]$ is still an $\eps$-approximate equilibrium. 
We use the strategies of players in the threshold game to represent an $\eps$-approximate assignment to $\eps$-$\gc$, and build all 9 types of gates in $ \{G_{\xi}, G_{\times \xi}, G_{=}, G_{+}, G_{-}, G_{<}, G_{\wedge}, G_{\vee}, G_{\neg}\}$. 
We start from constructing an elementary game gadget $G_{\frac{1}{2}-}(\nm{v_1, v_2} v)$ (see Figure~\ref{fig:element-gadget}), where $v_1, v_2 \in V\cup \{nil\}$ are input players, $v\in V$ is the output player. 
The output player $v$ could have many out-coming edges, but it only has one in coming edge from the internal node $v_b$. 
The elementary gadget $G_{\frac{1}{2}-}(\nm{v_1, v_2} v)$ serves as a building block for later constructions and proves useful throughout our proof. 
Ideally, it poses the constraint that $\bx[v] = \max\{\frac{1}{2} - \bx[v_1] - \bx[v_2], 0\}$ in an equilibrium.
\begin{figure}[!h]
    \centering
	\begin{tikzpicture}
		\node (0) at (0, 0) [circle, draw, very thick] {$v_a$};
		\node (1) at (1.73, 1) [circle, draw, very thick] {$v_b$};
		\node (2) at (0, 2) [circle, draw, very thick, scale=1.2] {$v$};

		\draw[->,  very thick] (0) -- (1);
		\draw[->,  very thick] (1) -- (2);
		\draw[->,  very thick] (2) -- (0);
		
		\draw[->,  very thick] (2) -- (0,2.8);
		\draw[->,  very thick] (2) -- (-0.6,2.5);
		\draw[->,  very thick] (0,-0.8) -- (0);
		\draw[->,  very thick] (-0.6,-0.5) -- (0);
		
		\node[text width=1cm, scale=1] at (0.4, -1) {$v_1$};
		\node[text width=1cm, scale=1] at (-0.4, -0.6) {$v_2$};

	\end{tikzpicture}
    \caption{Elementary gadget}
    \label{fig:element-gadget}
\end{figure}
\begin{lemma}
\label{lem:element-gadget}
Consider the game gadget $G_{\frac{1}{2}-}( \nm{v_1, v_2} v)$ constructed in Figure~\ref{fig:element-gadget}. 
In any $\eps$-approximate equilibrium of the threshold game $\g(V, E, \frac{1}{2})$, we have $\bx[v] = \max\{\frac{1}{2} - \bx[v_1] - \bx[v_2], 0\} \pm \eps$. 
In particular, if we set $v_2 = nil$, then $\bx[v] = \max\{\frac{1}{2} - \bx[v_1], 0\} \pm \eps$; if we set $v_1, v_2 = nil$, then $\bx[v] = \frac{1}{2} \pm \eps$.
\end{lemma}
\begin{proof}
Consider the in-coming neighbors of player $v_a$, $N_{a} = \{v, v_1, v_2\}$. 
In an $\eps$-approximate equilibrium, if $\bx[v] + \bx[v_1] + \bx[v_2] > \frac{1}{2}+\eps$, we have $\bx[v_a] = 0\pm \eps$ and $\bx[v_b] = 1 \pm \eps$. 
It then follows that $\bx[v] = 0\pm \eps$. 
This implies $\bx[v_1] + \bx[v_2] > \frac{1}{2}$, and thus $\max\{\frac{1}{2} - \bx[v_1] - \bx[v_2], 0\} \pm \eps = 0\pm \eps$. 
This satisfies the equilibrium condition. 
If $\bx[v] + \bx[v_1] + \bx[v_2] < \frac{1}{2}-\eps$, then we have $\bx[v_a] = 1\pm \eps$ and $\bx[v_b] = 0 \pm \eps$, this again implies $\bx[v] = 1 \pm \eps$. 
This contradicts with the fact that $\bx[v] + \bx[v_1] + \bx[v_2] < \frac{1}{2}-\eps$. 
In summary, we have $\bx[v] = \max\{\frac{1}{2} - \bx[v_1] - \bx[v_2], 0\} \pm \eps$.\qedhere
\end{proof}

Now we are ready to construct $G_{=}, G_{+}, G_{-}, G_{\frac{1}{2}}$ and $G_{\times \frac{1}{2}}$.  We assume the inputs of these gates belong to $[0, \frac{1}{2} + \eps]$, {\em except} for the COPY gate.
This assumption is not a loss of generality since: 
(1) if the input node $v_1$ is also the output node of another gate, then it is value is guaranteed to be in $[0, \frac{1}{2}+ \eps]$ by our construction below; (2) otherwise, we can always apply a COPY gate to restrict its value in $[0, \frac{1}{2}+ \eps]$.

(1) COPY $G_{=}(\nm{v_1} v)$. 
Concatenating $G_{\frac{1}{2}-}(\nm{v_1} v_2)$ with $G_{\frac{1}{2}-}(\nm{v_2} v)$, 
then we have $\bx[v_2] = \max\{\frac{1}{2} - \bx[v_1], 0\} \pm \eps$, and $\bx[v] = \max\{\frac{1}{2} - \bx[v_2], 0\} \pm \eps = \min\{\bx[v_1], \frac{1}{2}\} \pm 2\eps$.

(2) ADD $G_{+}(\nm{v_1, v_2} v)$. 
Concatenating $G_{\frac{1}{2}-}(\nm{v_1, v_2} v_3)$ with $G_{\frac{1}{2}-}(\nm{v_3} v)$, 
then we have $\bx[v_3] = \max\{\frac{1}{2} - \bx[v_1] - \bx[v_2], 0\} \pm \eps$ and $\bx[v] = \max\{\frac{1}{2} - \bx[v_3], 0\} \pm \eps = \min\{\bx[v_1] + \bx[v_2], \frac{1}{2}\} \pm 2\eps$.

(3) SUBTRACT $G_{-}(\nm{v_1, v_2} v)$. 
Concatenating $G_{\frac{1}{2}-}(\nm{v_1} v_3)$ with $G_{\frac{1}{2}-}(\nm{v_2, v_3} v)$, 
then we have $\bx[v_3] = \max\{\frac{1}{2} - \bx[v_1], 0\} \pm \eps$ and $\bx[v] = \max\{\frac{1}{2} - \bx[v_2] - \bx[v_3], 0\} \pm \eps = \max\{\bx[v_1] - \bx[v_2], 0\} \pm 2\eps$. 

(4) VALUE $G_{\frac{1}{2}}(v)$. 
We can simply use $G_{\frac{1}{2}-}(\nm{v_1} v)$ with $v_1 = nil$, i.e., there is no input to the gadget.

(5) HALF $G_{\frac{1}{2}}(\nm{v_1}, v)$. 
The HALF gate is shown in Figure~\ref{fig:half-gadget}. 
In order to halve the value of the input player, we need to carefully compose 4 elementary gadgets. 
Ideally, in an equilibrium profile, we expect $\bx[v_a] = \frac{1}{2} - \bx[v_1]$, $\bx[v_b] = \bx[v_1] - \bx[v]$ and $\bx[v_c] = \frac{1}{2} - \bx[v_1] + \bx[v]$.
The fourth gadget poses the constraint that $\frac{1}{2} - \bx[v_1] + \bx[v] + \bx[v] = \frac{1}{2}$, i.e., $\bx[v_1] = 2\bx[v]$. Formally, we have
\begin{figure}[!h]
    \centering
	\begin{tikzpicture}[scale=0.6, every node/.style={scale=0.6}]
		\node (10) at (0, 0) [circle, draw, very thick] {};
		\node (11) at (1.73, 1) [circle, draw, very thick] {};
		\node (12) at (0, 2) [circle, draw, very thick] {};
	    \node[text width=1cm, scale=1.5] at (0,2.2) {$v_b$};
	    \node[text width=1cm, scale=2] at (1.5,1) {$2$};

		\draw[->,  very thick] (10) -- (11);
		\draw[->,  very thick] (11) -- (12);
		\draw[->,  very thick] (12) -- (10);
		
		\node (20) at (4, 0) [circle, draw, very thick] {};
		\node (21) at (5.73, 1) [circle, draw, very thick] {};
		\node (22) at (4, 2) [circle, draw, very thick] {};
	    \node[text width=1cm, scale=1.5] at (6.8,1) {$v_c$};
	    \node[text width=1cm, scale=2] at (5.5,1) {$3$};

		\draw[->,  very thick] (20) -- (21);
		\draw[->,  very thick] (21) -- (22);
		\draw[->,  very thick] (22) -- (20);
		
		\node (30) at (2, -3) [circle, draw, very thick] {};
		\node (31) at (3.73, -2) [circle, draw, very thick] {};
		\node (32) at (2, -1) [circle, draw, very thick] {};
	    \node[text width=1cm, scale=1.5] at (2.2,-3) {$v$};
	    \node[text width=1cm, scale=2] at (3.5,-2) {$4$};

		\draw[->,  very thick] (30) -- (31);
		\draw[->,  very thick] (31) -- (32);
		\draw[->,  very thick] (32) -- (30);
		
        \node (40) at (-2, -3) [circle, draw, very thick] {};
		\node (41) at (-3, -1.27) [circle, draw, very thick] {};
		\node (42) at (-1, -1.27) [circle, draw, very thick] {};
	    \node[text width=1cm, scale=1.5] at (-0.52, -0.78) {$v_a$};
	    \node[text width=1cm, scale=2] at (-1.2, -1.9) {$1$};

		\draw[->,  very thick] (40) -- (41);
		\draw[->,  very thick] (41) -- (42);
		\draw[->,  very thick] (42) -- (40);
		
		\draw[->,  very thick] (30) -- (10);
		\draw[->,  very thick] (12) -- (22);
		\draw[->,  very thick] (21) -- (31);
		\draw[->,  very thick] (42) -- (10);
		
	    \draw[->,  very thick] (-2,-4) -- (40);
	    \node[text width=1cm, scale=1.5] at (-1.5, -4.3) {$v_1$};
	    \draw[->,  very thick] (30) -- (2,-4);

	\end{tikzpicture}
    \caption{HALF gate}
    \label{fig:half-gadget}
\end{figure}
\begin{lemma}
\label{lem:half-gadget}
Consider the game gadget constructed in Figure~\ref{fig:half-gadget}, in any $\eps$-approximate equilibrium of the threshold game $\g(V, E, \frac{1}{2})$, we have $\bx[v] = \frac{1}{2}\max\{\bx[v_1], \frac{1}{2}\}\pm 5\eps$. 
\end{lemma}
\begin{proof}
    By Lemma~\ref{lem:element-gadget}, we have $\bx[v_a] = \max\{\frac{1}{2} - \bx[v_1], 0\} \pm \eps = \frac{1}{2} - \bx[v_1] \pm 2\eps$. 
    The second equality holds since we assume $\bx[v_1] \leq \frac{1}{2} + \eps$ . 
    Consequently, we have $\bx[v_b] = \max\{\frac{1}{2} - (\frac{1}{2} - \bx[v_1]) - \bx[v], 0\} \pm 3\eps = \max\{ \bx[v_1] - \bx[v], 0\} \pm 3\eps$. 
    We consider two cases.
    
    (1) If $\bx[v_1] \geq \bx[v]$, then we have $\bx[v_b] = \bx[v_1] - \bx[v] \pm 3\eps$ and $\bx[v_c] = \max\{\frac{1}{2} - \bx[v_1] + \bx[v], 0\} \pm 4\eps = \frac{1}{2} - \bx[v_1] + \bx[v]\pm 4\eps$. 
    Now consider the fourth gadget, we conclude that $\bx[v] = \max\{\frac{1}{2} - (\frac{1}{2} - \bx[v_1] + \bx[v]), 0\} \pm 5\eps = \bx[v_1] - \bx[v] \pm 5\eps$, which implies that $\bx[v] = \frac{1}{2}\bx[v_1] + 3\eps$.
    
    (2) If $\bx[v_1] < \bx[v]$, then we have $\bx[v_b] = 0 \pm 3\eps$. This implies $\bx[v_c] = \frac{1}{2} \pm 4\eps$ and $\bx[v] =0 \pm 5\eps$. 
    Since we assume $\bx[v_1] < \bx[v]$, it follows that $\bx[v_1] < 5\eps$ and $\bx[v] = \frac{1}{2}\bx[v_1]\pm 5\eps$. Hence, we conclude the proof. 
\end{proof}

Logic gates can be implemented simply as $G_{+}, G_{-}$ and $G_{\frac{1}{2}}$. However, gates constructed this way are not error resilient, i.e., they would amplify the input errors, and this would only establish PPAD-hardness for approximation $\eps = 1/\poly(n)$.  Error resilience brings this up to a fixed constant.
The idea is that we first convert $\{0, \frac{1}{2}\}$ to $\{0, 1\}$,  do logic operations on $\{0, 1\}$ (which is error resilient), and finally transform back to $\{0, \frac{1}{2}\}$

We first construct a gadget $G_{1, \frac{1}{2}}(\nm{v_1} v)$ that transforms $\{0, 1\}$ to $\{0, \frac{1}{2}\}$. We can simply use the COPY gate and it satisfies
\begin{align*}
    \bx[v] = \left\{
    \begin{matrix}
    0 \pm 3\eps & \bx[v_1] = 0 \pm \eps\\
    \frac{1}{2} \pm 3\eps & \bx[v_1] = 1\pm \eps
    \end{matrix}\right..
\end{align*}
\begin{figure}[!h]
		\centering
		\begin{minipage}[t]{.3\textwidth}
			\centering
			\begin{tikzpicture}
			\node (0) at (-1, 0.5) [circle, draw, very thick] {};
			\node (1) at (0, 0) [circle, draw, very thick] {};
			\node (2) at (-1, -0.5) [circle, draw, very thick] {};
			
			\draw[->,  very thick] (0) -- (1);
			\draw[->,  very thick] (2) -- (1);

			\node[text width=1cm] at (-1.2, 0.5) {$x_1$};
			\node[text width=1cm] at (-1.7, -0.5) {$\frac{1}{2}-x_2$};

			\end{tikzpicture}
			\caption{Comparison gadget.}
			\label{pic:comp-gadgets}
		\end{minipage}%
		\begin{minipage}[t]{.3\textwidth}
			\centering
			\begin{tikzpicture}
			\node (0) at (-1, 0.5) [circle, draw, very thick] {};
			\node (1) at (0, 0) [circle, draw, very thick] {};
			\node (2) at (-1, -0.5) [circle, draw, very thick] {};
			\node (3) at (1, 0) [circle, draw, very thick] {};
			
			\draw[->,  very thick] (0) -- (1);
			\draw[->,  very thick] (2) -- (1);
			\draw[->,  very thick] (1) -- (3);

			\node[text width=1cm] at (-1.2, 0.5) {$x_1$};
			\node[text width=1cm] at (-1.2, -0.5) {$x_2$};
			\node[text width=2cm] at (1.4, 0.3) {$x_1\vee x_2$};
			\end{tikzpicture}
			\caption{OR gadget.}
			\label{pic:or-gadgets}
		\end{minipage}
				\begin{minipage}[t]{.3\textwidth}
			\centering
			\begin{tikzpicture}
			\node (0) at (-0.4, 0) [circle, draw, very thick] {};
			\node (1) at (0.6, 0) [circle, draw, very thick] {};
			
			\draw[->,  very thick] (0) -- (1);
			
			\node[text width=1cm] at (0, 0.4) {$x$};
			\node[text width=1cm] at (1, 0.4) {$\bar{x}$};

			\node[text width=0.01cm] at (-1,0.5) {};
			\node[text width=0.01cm] at (-1,-0.5) {};
			\end{tikzpicture}
			\caption{Not gadget.}
			\label{pic:not-gadgets}
		\end{minipage}
	\end{figure}
	
Before we construct $G_{\frac{1}{2}, 1}(\nm{v_1} v)$, which transforms $\{0, \frac{1}{2}\}$ to $\{0, 1\}$, we construct the comparison gadget.

(6) COMPARE $G_{<}(\nm{v_1, v_2} v)$. 
We first apply $G_{\frac{1}{2}-}(\nm{v_2} v_a)$, then connect $v_a$, $v_1$ to a new vertice $v_b$ (see fig.~\ref{pic:comp-gadgets}). 
Finally, we concatenate a gadget $G_{1, \frac{1}{2}}(\nm{v_b} v)$. 
Since we will never compare numbers greater than $\frac{1}{2} + \eps$, we have $\bx[v_a] = \max\{\frac{1}{2} - \bx[v_2], 0\} \pm \eps = \frac{1}{2} - \bx[v_2] \pm 2\eps$ and $\bx[v_a] + \bx[v_1] = \frac{1}{2} - \bx[v_2] + \bx[v_1] \pm 2\eps $. 
If $\bx[v_1] < \bx[v_2] - 3\eps$, then $\bx[v_a] + \bx[v_1] < \frac{1}{2} - \eps$.
This implies $\bx[v_b] = 1 \pm \eps$ and $\bx[v] = \frac{1}{2} \pm 3\eps$. 
If $\bx[v_1] > \bx[v_2] + 3\eps$, then $\bx[v_a] + \bx[v_1] > \frac{1}{2} + \eps$ and $\bx[v_b] = 0 \pm \eps$. 
It then follows that $\bx[v] = 0\pm 3\eps$. 

Now, we can just use a truncated version $G_{<}(\nm{\frac{1}{4}, v_1} v)$, which does not include the last $G_{1, \frac{1}{2}}(\nm{v_b} v)$ gadget), as the transformer gadget $G_{\frac{1}{2}, 1}(\nm{v_1} v)$. 
It satisfies
\begin{align*}
    \bx[v] = \left\{
    \begin{matrix}
    0 \pm \eps & \bx[v_1] = 0 \pm \eps\\
    1 \pm \eps & \bx[v_1] = \frac{1}{2}\pm \eps
    \end{matrix}\right..
\end{align*}

We next construct the logic gates.

(7) OR $G_{\vee}(\nm{v_1, v_2} v)$.
We first apply transformation gadgets $G_{\frac{1}{2}, 1}(\nm{v_1} v_a)$ and $G_{\frac{1}{2}, 1}(\nm{v_2} v_b)$, then connects $v_a$ and $v_b$ to a new vertex $v_c$, which is then connected to the vertex $v_d$ (see Figure~\ref{pic:or-gadgets}).
Finally, we concatenate the transformation gate $G_{1, \frac{1}{2}}(\nm{v_d} v)$. 
Suppose $\bx[v_1] = \frac{1}{2} \pm 3\eps$, it then follows $\bx[v_a] = 1\pm 3\eps$. Therefore, $\bx[v_c] = 0 \pm \eps$ and $\bx[v_d] = 1\pm \eps$. 
After the transformation, we have $\bx[v] = \frac{1}{2} \pm 3\eps$. 
Similarly, $\bx[v_1] = \frac{1}{2} \pm 3\eps$.
On the other hand, if $\bx[v_1], \bx[v_2] = 0 \pm 3\eps$, then we have $\bx[v_a], \bx[v_b] = 0 \pm 6\eps$ and $\bx[v_c] = 1\pm \eps$. 
Hence, $\bx[v_d] = 0 \pm \eps$. After the transformation, we get $\bx[v] = 0 \pm 3\eps$.

(8) NOT $G_{\vee}(\nm{v_1} v)$. 
We first apply $G_{\frac{1}{2}, 1}(\nm{v_1} v_a)$. 
We then connect $v_a$ to a new node $v_b$, and apply $G_{1, \frac{1}{2}}(\nm{v_b} v)$ (see Figure~\ref{pic:not-gadgets}); proof omitted.

We can construct the AND gadget $G_{\wedge}(\nm{v_1, v_2} v)$ using $G_{\vee}(\nm{v_1} v)$ and $G_{\neg}(\nm{v_1, v_2} v)$. 
Thus far, we have constructed all 9 types of gates $\{G_{\xi}, G_{\times \xi}, G_{=}, G_{+}, G_{-}, G_{<}, G_{\wedge}, G_{\vee}, G_{\neg}\}$, and therefore, we conclude the proof for Theorem~\ref{thm:threshold-hard}.
\end{proof}

	\fi

\end{document}